\newif\ifnopromise
\renewcommand\subsubsection{\@startsection{subsubsection}{3}{\z@}%
                       {-4.5\p@ \@plus -1\p@ \@minus -1\p@}%
                       {-0.25em \@plus -0.11em \@minus -0.05em}%
                       {\normalfont\normalsize\bfseries\boldmath}}
\tikzstyle{fun}=[draw,very thick,fill=white,drop shadow]
\tikzstyle{plus}=[fun,shape=myplus,inner sep=0pt,minimum size=2ex]
\tikzstyle{xor}=[fun,shape=myxor,inner sep=0pt,minimum size=2ex]
\tikzstyle{rot}=[fun,shape=myrot,inner sep=0pt,minimum size=2ex]
\tikzstyle{fork}=[shape=circle,inner sep=0pt,minimum size=3pt,fill=black]
\DeclareMathOperator{\CBCMAC}{CBC-MAC}
\DeclareMathOperator{\PMAC}{PMAC}
\DeclareMathOperator{\GMAC}{GMAC}
\DeclareMathOperator{\OCB}{OCB}
\DeclareMathOperator{\OMD}{OMD}
\let\doublebar\|
\DeclarePairedDelimiter\norm{\doublebar}{\doublebar}
\DeclareMathSymbol{\|}{\mathbin}{symbols}{"6B}
 \newcommand{\GL}[1]{}
 \newcommand{\MN}[1]{}
 \newcommand{\AL}[1]{}
 \newcommand{\MK}[1]{}
\newcommand{\ket}[1]{| #1 \rangle}
\newcommand{\hide}[1]{}
\newcounter{subfig}
\begin{document}

\author{Marc Kaplan\inst{1,2} \and Ga\"etan Leurent\inst{3}
Anthony Leverrier\inst{3} \and Mar\'ia Naya-Plasencia\inst{3}}

\institute{LTCI, T\'el\'ecom ParisTech, 23 avenue d'Italie, 75214 Paris CEDEX 13, France
\and
School of Informatics, University of Edinburgh,\\
10 Crichton Street, Edinburgh EH8 9AB, UK
\and
Inria Paris, France}



\title{Breaking Symmetric Cryptosystems using Quantum Period Finding}

\maketitle

\thispagestyle{plain}

\begin{abstract}
Due to Shor's algorithm, quantum computers are a severe threat for
public key cryptography. This motivated the cryptographic community to search for quantum-safe solutions.
On the other hand, the impact of quantum computing on secret key cryptography  is much less understood.
In this paper, we consider attacks where 
an
adversary can query an oracle implementing a cryptographic
primitive in a quantum superposition of different states.  This model gives a
lot of power to the adversary, but recent results show that it is
nonetheless possible to build secure cryptosystems in it.

We study applications of a quantum procedure called \emph{Simon's
  algorithm} (the simplest quantum period finding algorithm) in order to attack symmetric cryptosystems in this model.
Following previous works in this direction, we show that several
classical attacks based on finding collisions
can be dramatically sped up using Simon's algorithm: finding a collision
requires $\Omega(2^{n/2})$ queries in the classical setting, but
when collisions happen with some hidden periodicity, they can be found with only
$O(n)$ queries in the quantum model.

We obtain attacks with very strong implications.  First, we show that
the most widely used modes of operation for authentication and
authenticated encryption (\emph{e.g.}  CBC-MAC, PMAC, GMAC, GCM, and
OCB) are completely broken in this security model.  Our attacks are also
applicable to many CAESAR candidates: CLOC, AEZ, COPA, OTR, POET, OMD,
and Minalpher.  This is quite surprising compared to the situation with
encryption modes: Anand \emph{et al.} show that standard modes are
secure with a quantum-secure PRF.

Second, we show that Simon's algorithm can also be applied to slide
attacks, leading to an exponential speed-up of a classical
symmetric cryptanalysis technique in the quantum model.

\hide{
\begin{enumerate}
\item 
\item First, we show that the most common modes of operation for
  message authentication and authenticated encryption (\emph{e.g.}
  CBC-MAC, GCM, and OCB) are completely broken in this setting.  We
  describe forgery attacks against these standardized modes, and
  against several CAESAR candidates with complexity only $O(n)$, where
  $n$ is the size of the block.
\item The second application is to slide attacks, a popular family of cryptanalysis that is  independent of the number of rounds of the attacked cipher. Our result is the first exponential quantum speed-up of a classical cryptanalysis technique, with complexity dropping from $O(2^{n/2})$ to $O(n)$, where $n$ is the size of the block.
\end{enumerate}
}
\medskip

  {\bf Keywords:} post-quantum cryptography, symmetric cryptography,
  quantum attacks, block ciphers, modes of operation, slide attack.
\end{abstract}

\section{Introduction}
The goal of 
post-quantum cryptography is to prepare cryptographic primitives to resist quantum adversaries, {\em i.e.} adversaries with access to a quantum computer.
Indeed, cryptography would be particularly affected by the development of large-scale quantum computers. While currently used asymmetric cryptographic primitives would suffer from devastating attacks due to Shor's algorithm~\cite{DBLP:journals/siamcomp/Shor97}, 
the status of symmetric ones is not so clear: generic attacks, which define the security of ideal symmetric primitives, would get a quadratic speed-up thanks to Grover's algorithm~\cite{DBLP:conf/stoc/Grover96}, hinting that doubling the key length could restore an equivalent ideal security in the post-quantum world. Even though the community seems to consider the issue settled with this solution~\cite{bernstein2009introduction}, only very little is known about real world attacks, that determine the real security of used primitives. Very recently,  this direction has started to draw attention, and interesting results have been obtained. 
New theoretical frameworks to take into account quantum adversaries have been developed~\cite{boneh11,DBLP:conf/eurocrypt/BonehZ13,DBLP:conf/icits/DamgardFNS13,gagliardoni2015semantic,broadbent2015quantum,alagic2016computational}. 

Simon's algorithm~\cite{simon1997power} is central in quantum algorithm theory.
Historically, it was
an important milestone in the discovery by Shor of his celebrated quantum algorithm to solve integer factorization in polynomial time~\cite{DBLP:journals/siamcomp/Shor97}.
Interestingly, Simon's algorithm has also been applied in the context of symmetric cryptography. It was first used to break the 3-round Feistel construction~\cite{5513654} and 
then to prove that the Even-Mansour construction~\cite{6400943} is insecure with superposition queries.
While Simon's problem (which is the problem solved with Simon's
algorithm) might seem artificial at first sight, it appears  in certain
constructions in symmetric cryptography, in which ciphers and modes typically involve a lot of structure.

These first results, although quite striking, are not sufficient for evaluating the security of actual ciphers. Indeed, the confidence we have on symmetric ciphers depends on the amount of cryptanalysis that was performed on the primitive.
Only this effort allows researchers to define the security margin which measures how far  the construction is from being broken. Thanks to the large and always updated cryptanalysis toolbox built over the years in the \emph{classical} world, we have solid evaluations of the security of the primitives against classical adversaries. This is, however, no longer the case in the post-quantum world, \emph{i.e.} when considering quantum adversaries.

We therefore need to build a complete cryptanalysis toolbox for quantum adversaries, similar to what has been done for the classical world. This is a fundamental step in order to correctly evaluate the post-quantum security of current ciphers and to design new secure ciphers for the post-quantum world.

\subsubsection{Our results.}

We make progresses in this direction, and open new surprising and important ranges of applications for Simon's algorithm in symmetric cryptography:
\begin{enumerate}
\item The original formulation of Simon's algorithm is for functions whose collisions
happen only at some hidden period. We extend it to functions that have 
more collisions. This leads to 
a better analysis of previous applications of Simon's algorithm
  in symmetric cryptography.
\item We then show an attack against the LRW construction, used to turn
  a block-cipher into a tweakable block
  cipher~\cite{DBLP:journals/joc/LiskovRW11}.  Like the results
  on 3-round Feistel and Even-Mansour, this is an example of
  construction with provable security in the classical setting that becomes
  insecure against a quantum adversary.
\item Next, we study block cipher modes of operation.
We show that some of the most common modes for
  message authentication and authenticated encryption are completely broken in this setting.  We
  describe forgery attacks against standardized modes (CBC-MAC, PMAC,
  GMAC, GCM, and OCB), and against
  several CAESAR candidates, with complexity only $O(n)$, where $n$ is the size of the block.
In particular, this partially answers an open question by Boneh and
Zhandry~\cite{DBLP:conf/crypto/BonehZ13}: ``Do the CBC-MAC or
NMAC constructions give quantum-secure PRFs?''.

Those results are in stark contrast with a recent analysis of encryption
modes in the same setting: Anand \emph{et al.} show that
some classical encryption modes are secure against a quantum adversary
when using a quantum-secure
PRF~\cite{DBLP:conf/pqcrypto/AnandTTU16}. Our results imply that some authentication and authenticated encryption schemes
remain insecure with \emph{any} block cipher.
\item The last application is a quantization of slide attacks, a popular family of cryptanalysis that is  independent of the number of rounds of the attacked cipher. Our result is the first exponential speed-up obtained directly by a quantization of a classical cryptanalysis technique, with complexity dropping from $O(2^{n/2})$ to $O(n)$, where $n$ is the size of the block.
\end{enumerate}
These results imply that for the symmetric primitives we analyze, doubling the key length is not 
sufficient to restore security against quantum adversaries. 
A significant effort on quantum cryptanalysis of symmetric primitives is thus 
crucial for our long-term trust in these cryptosystems.

\subsubsection{The attack model.}

We consider attacks against classical cryptosystems using quantum resources.
This general setting broadly defines the field of post-quantum cryptography.
But attacking specific cryptosystems requires a more precise definition of the operations the adversary
is allowed to perform.
The simplest setting allows the adversary to perform local quantum computation.
For instance, this can be modeled by the quantum random oracle model, in which the adversary can query 
the oracle in an arbitrary superposition of the inputs~\cite{boneh11,brassard2011merkle,zhandry2015secure,unruh15}. A more practical setting allows
quantum queries to the hash function used to instantiate the oracle on a quantum computer.

We consider here a much stronger model in which,
in addition to local quantum operations, an adversary 
is granted an access to a possibly remote cryptographic oracle in superposition of the inputs, and
obtains the corresponding superposition of outputs.
In more detail, if the encryption oracle is described by a classical function $\mathcal{O}_k : \{0,1\}^n \to \{0,1\}^n$, then the adversary can make standard quantum queries
$|x\rangle |y\rangle \mapsto |x\rangle |\mathcal{O}_k(x) \oplus y\rangle$,
where $x$ and $y$ are arbitrary $n$-bit strings and $\ket x$, $|y\rangle$ are the corresponding $n$-qubit states expressed in the computational basis.  A circuit representing the oracle is given in Figure~\ref{fig:oracle}.
Moreover, any superposition $\sum_{x,y} \lambda_{x,y} \ket x \ket y$ is a valid input to the quantum oracle, who then
returns $\sum_{x,y} \lambda_{x,y} \ket x \ket {y\oplus O_k(x)}$.
In previous works, these attacks
have been called \emph{superposition
  attacks}~\cite{DBLP:conf/icits/DamgardFNS13}, \emph{quantum chosen
  message attacks}~\cite{DBLP:conf/crypto/BonehZ13} or \emph{quantum security}~\cite{DBLP:conf/focs/Zhandry12}.

Simon's algorithm requires the preparation of the uniform superposition of all $n$-bit strings,
 $\frac{1}{\sqrt{2^n}} \sum_x |x\rangle |0\rangle$\footnote{When there is no ambiguity, we
write $|0\rangle$ for the state $|0 \ldots 0\rangle$ of appropriate length.}.
For this input, the quantum encryption oracle returns
$ \frac{1}{\sqrt{2^n}} \sum_x |x\rangle |\mathcal{O}_k(x)\rangle$,
the superposition of all possible pairs of plaintext-ciphertext.
It might seem at first that this model gives an overwhelming power to the adversary and is therefore uninteresting. Note, however, that the laws of quantum mechanics imply that the measurement of such a $2n$-qubit state can only reveal $2n$ bits of information, making this model nontrivial.

\begin{figure}[t]
	\centering
	\begin{tikzpicture}
	\node       (0inp) at (0,0) {$\ket 0$};
	\node	(xinp) at (0,1) {$\ket x$};
	\node	(0inpanc) at (0.7,0) {} edge[-] (0inp);
	\node	(xinpanc) at (0.7,1) {} edge[-] (xinp);
	\node[fun, minimum width=1cm, minimum height = 1.5cm]		(fun) at (1,0.5) {$\mathcal O_k$};
	\node	(xou2anc) at (1.4,0) {};
	\node	(xou1anc) at (1.4,1) {};
	\node	(xout1)  at (2,1) {$\ket x$} edge[-] (xou1anc);
	\node	(xout2)  at (2.3,0) {$\ket{\mathcal O_k(x)}$} edge[-] (xou2anc);
	\end{tikzpicture}
	\caption{The quantum cryptographic oracle.}
\end{figure}
\label{fig:oracle}

The simplicity of this model, together with the fact that it encompasses any reasonable model of quantum attacks
makes it very interesting.
For instance, \cite{DBLP:conf/eurocrypt/BonehZ13} 
gave constructions of message authenticated codes that remain secure against superposition attacks.
A similar approach was initiated by~\cite{DBLP:conf/icits/DamgardFNS13}, 
who showed how to construct secure multiparty protocols when an adversary
can corrupt the parties in superposition. A  protocol that is proven secure in this model may
truthfully be used in a quantum world.

Our work shows that superposition attacks, although they are not trivial, allow
new powerful strategies for the adversary.
Modes of operation that are provably secure against classical attacks
can then be broken. There exist a few options
to prevent the attacks that we present here.
A possibility is to forbid all kind of quantum access to a cryptographic oracle.
In a world where quantum resources become available, this restriction requires a careful attention.
This can be achieved for example by performing a quantum measurement of any incoming
quantum query to the oracle. But this task involves meticulous
engineering of quantum devices whose outcome remains uncertain. Even information theoretically secure
quantum cryptography remains vulnerable to attacks on their implementations,
as shown by attacks on quantum key distribution~\cite{zhao2008quantum,lydersen2010hacking,xu2010experimental}.

A more realistic approach is to develop a set of protocols that remains secure
against superposition attacks. Another advantage
of this approach is that it also covers more advanced scenarios, for example when
an encryption device is given to the adversary as an obfuscated algorithm.
Our work shows how important it is to develop protocols that remain secure against
superposition attacks.

Regarding symmetric cryptanalysis, we have already mentioned the protocol
of Boneh and Zhandry for MACs that remains secure against superposition attacks.
In particular, we answer negatively to their question asking wether CBC-MAC is secure in their model. 
Generic quantum attacks against symmetric cryptosystems
have also been considered. 
For instance, \cite{DBLP:journals/corr/Kaplan14} studies the security of iterated block ciphers, and 
 Anand et al. investigated the security of various modes of operations for encryption 
against superposition attacks~\cite{DBLP:conf/pqcrypto/AnandTTU16}. They show that OFB and CTR remain secure, while CBC and CFB
are not secure in general (with attacks involving Simon's algorithm), but 
are secure if the underlying PRF is quantum secure. Recently, \cite{DBLP:journals/corr/KaplanLLN15} considers 
symmetric families of cryptanalysis, describing quantum versions of differential and linear attacks.

Cryptographic notions like indistinguishability or semantic security are well understood
in a classical world. However, they become difficult to formalize when considering
quantum adversaries. The quantum chosen message model
is a good framework to study these~\cite{gagliardoni2015semantic,broadbent2015quantum,alagic2016computational}.

In this paper, we consider forgery attacks: the goal of
the attacker is to forge a tag for some arbitrary message, without the
knowledge of the secret key. In a quantum setting, we follow the
EUF-qCMA security definition that was given by Boneh and Zhandry~\cite{DBLP:conf/eurocrypt/BonehZ13}. A message
authentication code is broken by a quantum existential forgery attack if after $q$ queries to the cryptographic oracle, the adversary
can generate at least $q+1$ valid messages with corresponding tags.

\subsubsection{Organization.}

The paper is organized as follows. First, Section~\ref{sec:algo} introduces Simon's algorithm and explains how to modify it in order to handle functions that only approximately satisfy Simon's promise. This variant seems more appropriate for symmetric cryptography and may be of independent interest. 
Section~\ref{sec:previous} summarizes known quantum attacks against various constructions in symmetric cryptography.   
Section~\ref{sec:LRW} presents the attack against the LRW constructions.
In Section~\ref{sec:modes}, we show how Simon's algorithm can be used to obtain devastating attacks on several widely used modes of operations: CBC-MAC, PMAC, GMAC, GCM, OCB, as well as several CAESAR candidates. Section~\ref{sec:slide} shows the application of the algorithm to slide attacks, providing an exponential speed-up. The paper ends in Section~\ref{sec:conc} with a conclusion, pointing out possible new directions and applications.

\section{Simon's algorithm and attack strategy}
\label{sec:algo}
In this section, we  present Simon's problem~\cite{simon1997power} and the quantum algorithm for efficiently solving it. 
The simplest version of our attacks directly exploits this algorithm in order to recover some secret value of the encryption algorithm. Previous works have already considered such attacks against 3-round Feistel schemes and the Even-Mansour construction (see Section \ref{sec:previous} for details). 

Unfortunately, it is not always possible to recast an attack in terms of Simon's problem. More precisely, Simon's problem is a promise problem, and in many cases, the relevant promise (that only a structured class of collisions can occur) is not satisfied, far from it in fact. We show in Theorem~\ref{th:approx} below that, however, these additional collisions do not lead to a significant increase of the complexity of our attacks.

\subsection{Simon's problem and algorithm}
\label{sec:algorithm}
We first describe Simon's problem, and then the quantum  algorithm for solving it. 
We refer the reader to the recent review by Montanaro and de Wolf on quantum property testing for various applications of this algorithm \cite{MW13}.
We assume here a basic knowledge of the quantum circuit model. We denote the addition and
multiplication in a field with $2^n$ elements by ``$\oplus$'' and
``$\cdot$'', respectively.

We consider that the access to the input of Simon's problem, a function $f$, is made by querying it. A classical query oracle is a function $x \mapsto f(x)$. To run Simon's algorithm, it is required that the function $f$ can be queried quantum-mechanically. More precisely, it is supposed that the algorithm can make arbitrary quantum superpositions of queries of the form $\ket x \ket 0 \mapsto \ket x \ket {f(x)}$. 

Simon's problem is the following:
\begin{framed}
\noindent
\textbf{Simon's problem}: Given a Boolean function $f: \{0,1\}^n \rightarrow \{0,1\}^n$ and the promise that there exists $s \in \{0,1\}^n$ such that for any $(x,y) \in \{0,1\}^n,$ $[f(x)=f(y)]\Leftrightarrow [x\oplus y \in \{0^n, s\}]$, the goal is to find s.
\end{framed}

This problem can be solved classically by searching for collisions. The optimal
time to solve it is therefore $\Theta (2^{n/2}).$
On the other hand, Simon's algorithm solves this problem with quantum complexity $O(n)$.
Recall that the Hadamard transform $H^{\otimes n}$ applied on an $n$-qubit state $|x\rangle$ for some $x \in \{0,1\}^n$ gives
$H^{\otimes n} |x\rangle = \frac{1}{\sqrt{2^n}} \sum_{y \in \{0,1\}^n} (-1)^{x\cdot y} |y\rangle,$
where $x \cdot y := x_1 y_1 \oplus \cdots \oplus x_n y_n$.
 
The algorithm repeats the following five quantum steps.
\begin{enumerate}
\item Starting with a $2n$-qubit state $|0\rangle |0\rangle$, one applies a Hadamard transform $H^{\otimes n}$ to the first register to obtain the quantum superposition 
$$\frac{1}{\sqrt{2^n}} \sum_{x \in \{0,1\}^n} |x\rangle |0\rangle.$$
\item A quantum query to the function $f$ maps this to the state
$$\frac{1}{\sqrt{2^n}} \sum_{x \in \{0,1\}^n} |x\rangle |f(x)\rangle.$$
\item Measuring the second register in the computational basis yields a value $f(z)$ and collapses the first register to the state:
$$
\frac{1}{\sqrt{2}} (|z\rangle + |z \oplus s\rangle).
$$
\item Applying again the Hadamard transform $H^{\otimes n}$ to the first register gives:
$$
\frac{1}{\sqrt{2}} \frac{1}{\sqrt{2^n}} \sum_{y \in \{0,1\}^n} (-1)^{y\cdot z}\left(1+ (-1)^{y\cdot s}  \right)|y\rangle.
$$
\item The vectors $y$ such that $y\cdot s=1$ have amplitude 0. Therefore,
measuring the state in the computational basis yields a random vector $y$
 such that $y \cdot s = 0$. 
\end{enumerate}
By repeating this subroutine $O(n)$ times, one obtains $n-1$ independent vectors orthogonal to $s$ with high probability, and
$s$ can be recovered using basic linear algebra. Theorem \ref{th:approx} gives the trade-off between the number of repetitions of the subroutine and the success probability of the algorithm.

\subsection{Dealing with unwanted collisions}\label{variant}

In our cryptanalysis scenario, it is not always the case that the promise of Simon's problem is perfectly satisfied. More precisely, by construction, there will always exist an $s$ such that $f(x) = f(x \oplus s)$ for any input $x$, but there might be many more collisions than those of this form. If the number of such unwanted collisions is too large, one might not be able to obtain a full rank linear system of equations from Simon's subroutine after $O(n)$ queries. 
Theorem \ref{th:approx} rules this out provided that $f$
does not have too many collisions of the form $f(x) = f(x \oplus t)$
for some $t \not\in \{0, s\}$.

For  $f: \{0,1\}^n \to \{0,1\}^n$ such that $f(x \oplus s) = f(x)$ for all $x$, consider 
\begin{align}
\varepsilon(f,s) = \max_{t \in \{0,1\}^n \setminus\{0,s\}} \mathrm{Pr}_x[f(x) = f(x\oplus t)].
\end{align}
This parameter quantifies how far the function is from satisfying
Simon's promise. For a random function, one expects $\varepsilon(f,s) =
\Theta(n 2^{-n})$, following the analysis of~\cite{DBLP:journals/jmc/DaemenR07}. On the other hand, for a constant function, $\varepsilon(f,s)= 1$ and it is impossible to recover $s$.

The following theorem, whose proof can be found in Appendix~\ref{app:proof}, shows the effect of unwanted collisions on the success probability of Simon's algorithm.
\begin{theorem}[Simon's algorithm with approximate promise]\label{th:approx}
If $\varepsilon(f,s) \leq p_0   <1$, then Simon's algorithm returns $s$ with $c n$ queries, with probability at least $1-\big(2 \big(\frac{1+p_0}{2} \big)^{\! c}\big)^n$.
\end{theorem}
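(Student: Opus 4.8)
The theorem bounds the failure probability of Simon's algorithm when the promise is only approximately satisfied. Let me think through this.

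We run Simon's subroutine $cn$ times, obtaining vectors $y_1, \dots, y_{cn}$, each orthogonal to $s$ (since $y \cdot s = 0$ always holds by the amplitude cancellation in step 5 — this part is exact regardless of extra collisions). The algorithm succeeds if these vectors span the full $(n-1)$-dimensional space orthogonal to $s$, i.e., if they have rank $n-1$. So failure means the $y_i$ lie in some proper subspace of $s^\perp$.

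Now, what do the extra collisions do? In step 3, measuring the second register gives $f(z)$, and the first register collapses to a uniform superposition over the full preimage set $f^{-1}(f(z))$, not just $\{z, z\oplus s\}$. If there are extra collisions $f(x) = f(x \oplus t)$, the preimage set is larger, and the Hadamard transform in step 4 produces a distribution on $y$ that is no longer uniform on $s^\perp$ — it is biased by the extra periods $t$.

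**Key steps.**

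The standard approach is a union bound over subspaces. For failure to occur, all $cn$ sampled vectors must fall into some hyperplane of $s^\perp$ (a codimension-1 subspace, i.e., dimension $n-2$). The number of such subspaces is at most $2^n$ (roughly $2^{n-1}$). For a fixed proper subspace $H \subsetneq s^\perp$, I want to bound the probability that a single sample $y$ lands in $H$.

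The crucial claim is: for any fixed nonzero $u \in s^\perp$, the probability over a single run that the sampled $y$ satisfies $y \cdot u = 0$ is at most $\frac{1+\varepsilon(f,s)}{2} \le \frac{1+p_0}{2}$. Here is where $\varepsilon$ enters: one computes the probability that $y \cdot u = 0$ for a uniformly random $y$ drawn from the Hadamard-transformed preimage superposition, averages over the measurement outcome $f(z)$, and relates the "excess probability beyond $1/2$" to $\Pr_x[f(x) = f(x \oplus u)]$ — which is at most $\varepsilon(f,s)$ by definition (since $u \neq 0, s$ when $u$ defines a proper subspace distinct from $s^\perp$). The bias toward $y \cdot u = 0$ is exactly driven by the correlation $f(x) = f(x\oplus u)$.

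**Assembling the bound.**

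Fixing a proper subspace $H$ of $s^\perp$, since the $cn$ runs are independent, the probability all samples land in $H$ is at most $\left(\frac{1+p_0}{2}\right)^{cn}$. Union-bounding over the $\le 2^{n-1} < 2^n$ hyperplanes gives failure probability at most $2^n \left(\frac{1+p_0}{2}\right)^{cn} = \left(2 \left(\frac{1+p_0}{2}\right)^{c}\right)^n$, which is the claimed bound. The linear-algebra step (rank $n-1$ suffices to recover $s$) is routine.

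**Main obstacle.**

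The hard part will be the single-sample bound: showing $\Pr[y \cdot u = 0] \le \frac{1 + \Pr_x[f(x)=f(x\oplus u)]}{2}$. This requires carefully tracking the amplitudes through steps 3–5 when the preimage set $f^{-1}(f(z))$ is arbitrary. One must expand the measured distribution on $y$, express $\Pr[y \cdot u = 0]$ as an average over the collapsed states, and identify the off-diagonal interference terms with the collision probability $\Pr_x[f(x) = f(x\oplus u)]$. Getting this identification clean — and confirming the inequality direction and the exact factor $\frac{1}{2}$ — is the technical crux; everything else is union bound and counting.
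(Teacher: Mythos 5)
Your proposal is correct and takes essentially the same route as the paper's proof: a union bound over the candidate dual vectors $t \notin \{0,s\}$ (your hyperplanes of $s^\perp$), independence of the $cn$ runs, and the single-sample bound relating $\Pr[y \cdot t = 0]$ to the collision probability, which the paper establishes as the exact identity $\Pr[y \cdot t = 0] = \tfrac{1}{2}\left(1 + \Pr_x[f(x) = f(x \oplus t)]\right)$. The technical crux you flag is handled in the paper exactly as you sketch --- by deferring the measurement and identifying the off-diagonal interference terms via the character-sum lemma $2^{-n}\sum_{y \in t^\perp}(-1)^{x \cdot y} = \tfrac{1}{2}(\delta_{x,0} + \delta_{x,t})$ --- so your plan matches the paper's argument in both structure and mechanism.
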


In particular, choosing $c \geq 3/(1-p_0)$ ensures that the
error decreases exponentially with $n$.
To apply our results, it is therefore sufficient to prove that $\varepsilon(f,s)$ is bounded away from 1.

\ifnopromise
Finally, if we apply Simon's algorithm without any 
bound on $\varepsilon(f,s)$, we can not always recover $s$
unambiguously.  Still if we select a random value $t$ orthogonal to all
vectors $u_i$ returned by each step of the algorithm, $t$ satisfy
$f(x \oplus t) = f(x)$ with high probability.

\begin{theorem}[Simon's algorithm without promise]\label{th:nopromise}
After $cn$ steps of Simon's algorithm, if $t$ is
orthogonal to all vectors $u_i$ returned by each step of the algorithm,
then
\( \Pr_x[f(x \oplus t) = f(t)] \ge p_0 \) with probability at least 
$1-\big(2 \big(\frac{1+p_0}{2} \big)^{\! c}\big)^n$.
\end{theorem}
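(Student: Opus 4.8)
The plan is to decompose the analysis into a single-step orthogonality probability, which I would borrow almost verbatim from the proof of Theorem~\ref{th:approx}, followed by a union bound over all candidate shifts. Write $p_t = \Pr_x[f(x) = f(x \oplus t)]$ for the collision probability at shift $t$, so that the desired conclusion is $p_t \ge p_0$. The bad event to rule out is ``there exists a shift $t$ with $p_t < p_0$ that is nonetheless orthogonal to all the $u_i$''; it suffices to bound the probability of this event, and since the conclusion is a universal statement over all $t$ in the orthogonal complement of $\operatorname{span}(u_1,\dots,u_{cn})$, the natural tool is a union bound over the at most $2^n$ shifts with $p_t < p_0$.

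First I would establish the single-step claim: for any fixed nonzero $t$, one iteration of Simon's subroutine outputs a vector $u$ with
$$\Pr[u \cdot t = 0] = \frac{1 + p_t}{2}.$$
This is exactly the character-sum computation underlying Theorem~\ref{th:approx}, which I would reproduce as follows. After the query and the measurement of the second register, the first register collapses to the uniform superposition over a preimage set $I = f^{-1}(v)$; summing the post-Hadamard measurement probabilities over the random value $v$ gives, for each output $u$,
$$\Pr[u] = \frac{1}{2^{2n}} \sum_{x, x' :\, f(x) = f(x')} (-1)^{(x \oplus x') \cdot u}.$$
Summing this over the subspace $\{u : u \cdot t = 0\}$ and using that $\sum_{u \cdot t = 0} (-1)^{w \cdot u}$ equals $2^{n-1}$ when $w \in \{0, t\}$ and vanishes otherwise, only the pairs with $x \oplus x' \in \{0, t\}$ survive. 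Counting them ($2^n$ diagonal pairs together with $2^n p_t$ pairs at shift $t$) yields the displayed identity.

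Next I would use that the $cn$ iterations are mutually independent, so for a fixed $t$ the probability that every $u_i$ is orthogonal to $t$ is $((1+p_t)/2)^{cn}$. For a bad $t$ (one with $p_t < p_0$) this is at most $((1+p_0)/2)^{cn}$, since $x \mapsto ((1+x)/2)^{cn}$ is increasing. A union bound over the at most $2^n$ bad shifts then gives
$$\Pr[\,\exists\, t \text{ with } p_t < p_0 \text{ orthogonal to all } u_i\,] \le 2^n \Big(\tfrac{1+p_0}{2}\Big)^{cn} = \Big(2 \big(\tfrac{1+p_0}{2}\big)^c\Big)^n.$$
On the complementary event, every $t$ orthogonal to all the $u_i$ necessarily satisfies $p_t \ge p_0$, which is precisely the assertion of the theorem.

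The bulk of the argument is routine; the one place requiring care is the single-step identity, where the character sum must be handled for a completely arbitrary $f$ with no promise at all, so I would be careful to assume nothing about the preimage sets beyond the fact that they partition $\{0,1\}^n$. The remaining ingredients --- independence of the iterations and the union bound over the exponentially many but still only $2^n$ candidate shifts --- are standard, and the exponent $n$ in the final bound arises exactly from this union bound competing against the per-step factor $(1+p_0)/2 < 1$.
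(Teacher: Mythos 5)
Your proposal is correct and follows essentially the same route as the paper's own proof: the single-step identity $\Pr[u \cdot t = 0] = \frac{1+p_t}{2}$ inherited from the analysis of Theorem~\ref{th:approx} (which indeed never uses the promise), independence of the $cn$ iterations, and a union bound over the at most $2^n$ shifts with $p_t < p_0$. If anything, your write-up is slightly more careful than the paper's, which states the union bound as an equality and leaves the validity of the character-sum computation for arbitrary $f$ implicit.
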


In particular, choosing $c \geq 3/(1-p_0)$ ensures that the probability
is exponentially close to 1.
\fi

\subsection{Attack strategy}

The general strategy behind our attacks exploiting Simon's algorithm is to start with the encryption oracle $E_k: \{0,1\}^n \to \{0,1\}^n$ and exhibit a new function~$f$ that satisfies Simon's promise with two additional properties: the adversary should be able to query $f$ in superposition if he has quantum oracle access to $E_k$, and the knowledge of the string $s$ should be sufficient to break the cryptographic scheme. In the following, this function is called Simon's function.

In most cases, our attacks correspond to a classical collision attack.
In particular, the value $s$ will usually be the difference in the
internal state after processing a fixed pair of messages $(\alpha_0,
\alpha_1)$, \emph{i.e.}  $s = E(\alpha_0) \oplus E(\alpha_1)$.  The
input of $f$ will be inserted into the state with the difference $s$ so
that $f(x) = f(x \oplus s)$.

In our work, this function $f$ is of the form:
\begin{align*}
  f^1: x \quad &\mapsto P(\widetilde{E}(x) + \widetilde{E}(x\oplus s)) \quad  \text{or,}\\
f^2:  b,x \quad &\mapsto
  \begin{cases}
    \widetilde{E}(x) & \text{if $b = 0$,}\\
    \widetilde{E}(x \oplus s) & \text{if $b = 1$},
  \end{cases}
\end{align*}
where $\widetilde{E}$ is a simple function obtained from $E_k$ and $P$ a permutation. 
It is immediate to see that $f^1$ and $f^2$ have periods $s$ for $f^1$ or $1 ||s$ for $f^2$. 

In most applications, Simon's function satisfies $f(x)=f(y)$ for
$y \oplus x \in \{0,s\}$, but also for additional inputs $x,y$. 
Theorem~\ref{th:approx} extends Simon's algorithm precisely to this case.
In particular, if the additional collisions of $f$ are random, then Simon's algorithm is successful.
When considering explicit constructions, we can not  in general prove that the unwanted collisions
\emph{are} random, but rather that they \emph{look random enough}.
In practice,  if the function $\varepsilon(f,s)$ is not bounded, then
some of the primitives used in the construction have are far from
ideal.  We can show that this happens with low probability, and would
imply an classical attack against the system.
Applying Theorem~\ref{th:approx} is not trivial, but it stretches the range of application of Simon’s algorithm far beyond its original version.

\subsubsection{Construction of Simon's functions.}

To make our attacks as clear as possible, we provide 
the diagrams of circuits computing the function $f$.  These
circuits use a little number of basic building blocks represented in
Figure~\ref{fig:blocks}.

In our attacks, we often use a pair of arbitrary constants $\alpha_0$ and $\alpha_1$.
The choice of the constant is indexed by a bit $b$. We denote by $U_\alpha$ the gate
that maps $b$ to $\alpha_b$ (See Figure~\ref{fig:1t1map}). 
For simplicity, we ignore here the
additional qubits required in practice to make the transform reversible through padding.

Although it is well known that arbitrary quantum states cannot be cloned, we use the \emph{CNOT} gate
to copy classical information. More precisely, a CNOT gate can
copy states in the computational basis:
\mbox{$CNOT: \ket x \ket 0 \to \ket x \ket x$}. This transform is represented in Figure~\ref{fig:cnot}.

Finally, any unitary transform $U$ can be controlled by a bit $b$. This operation, denoted $U^b$
maps $x$ to $U(x)$ if $b=1$ and leaves $x$ unchanged otherwise.
In the quantum setting, the qubit $\ket b$ can be in a superposition of $0$ and $1$,
resulting in a superposition of $\ket x$ and $\ket{U(x)}$.
The attacks that we present in the following sections only make use of
this procedure when the attacker knows a classical description
of the unitary to be controlled. In particular, 
we do not apply it to the cryptographic oracle.

When computing Simon's function, \emph{i.e.} the function $f$ on which Simon's algorithm is applied, 
the registers containing the value of $f$ must be unentangled with any other working register.
Otherwise, these registers, which might hinder the periodicity of the function, have to be
taken into account in Simon's algorithm and the whole procedure could fail.

\begin{figure}
  {
    \setcounter{subfig}{0}
    \makeatletter
    \long\def\@makecaption#1#2{%
      \small
      \stepcounter{subfig}%
      \vskip\abovecaptionskip
      \sbox\@tempboxa{{\bfseries #1.} #2}%
      \ifdim \wd\@tempboxa >\hsize
      {\bfseries #1.} #2\par
      \else
      \global \@minipagefalse
      \hb@xt@\hsize{\hfil\box\@tempboxa\hfil}%
      \fi
      \vskip\belowcaptionskip\addtocounter{figure}{-1}}
    \let\theoldfigure=\thefigure
    \def\thefigure{\theoldfigure.\intcalcAdd{\thesubfig}{1}}
    \def\fnum@figure{\theoldfigure.\thesubfig}
    \makeatother
    \centering
    {
    \begin{minipage}[t]{0.30\textwidth}
        \centering
	\begin{tikzpicture}
	\node       (0inp) at (0,0) {$\ket b$};
	\node[fun, minimum size=1cm]		(fun) at (1,0) {$U_\alpha$} edge[-] (0inp);
	\node	(xout2)  at (2.2,0) {$\ket {\alpha_b}$} edge[-] (fun);
	\end{tikzpicture}
        \caption{One-to-one mapping.}
        \label{fig:1t1map}
      \end{minipage}
    }%
    \hfill%
    {
    \begin{minipage}[t]{0.30\textwidth}
        \centering
	\begin{tikzpicture}
	\node       (0inp) at (0,0) {$\ket 0$};
	\node	(xinp) at (0,1) {$\ket x$};
	\node[fork]	(control) at (1,1) {} edge[-] (xinp);
	\node[xor]		(xor) at (1,0) {} edge[-] (0inp) edge[-] (control);
	\node	(xout1)  at (2,1) {$\ket x$} edge[-] (control);
	\node	(xout2)  at (2,0) {$\ket x$} edge[-] (xor);
	
	\end{tikzpicture} 
        \caption{CNOT gate.}
        \label{fig:cnot}
      \end{minipage}
    }%
    \hfill%
    {
            \begin{minipage}[t]{0.30\textwidth}
        \centering
	\begin{tikzpicture}
	\node       (0inp) at (0,0) {$\ket x$};
	\node	(xinp) at (0,1) {$\ket b$};
	\node[fork]	(control) at (1,1) {} edge[-] (xinp);
	\node[fun, minimum size=1cm]		(fun) at (1,0) {$U$} edge[-] (0inp) edge[-] (control);
	\node	(xout1)  at (2,1) {$\ket b$} edge[-] (control);
	\node	(xout2)  at (2.3,0) {$\ket {U^b(x)}$} edge[-] (fun);
	
	\end{tikzpicture}        
        \caption{Controlled Unitary.}
        \label{fig:controlledU}
      \end{minipage}
    }%
  }
  \caption{Circuit representation of basic building blocks.}\label{fig:blocks}
\end{figure}

\section{Previous works}
\label{sec:previous}

Previous works have used Simon's algorithm to break
the security of classical constructions in symmetric cryptography:
the Even-Mansour construction and the 3-round Feistel scheme.  We now
explain how these attacks work with our terminology and extend two of
the results. First, we show that the attack on the Feistel scheme can
be extended to work with random functions, where the original analysis held only
for random permutations. Second, using our analysis Simon's algorithm with approximate
promise, we make the number of queries required to attack the Even-Mansour construction
more precise.
These observations have been independently made by Santoli and Schaffner~\cite{SS}.
They use a slightly different approach, which consists in analyzing the run of Simon's algorithm for these
specific cases.

\subsection{Applications to a three-round Feistel scheme}
\label{sec:feistel}

The Feistel scheme is a classical construction to build a random
permutation out of random functions or random permutations.  In a seminal work, Luby and
Rackoff proved that a three-round Feistel scheme
is a secure pseudo-random
permutation~\cite{DBLP:journals/siamcomp/LubyR88}.

A three-round Feistel scheme with input $(x_L,x_R)$ and output $(y_L,
y_R)= E(x_L, x_R)$ is built from three round functions $R_1$, $R_2$, $R_3$ as (see Figure~\ref{fig:feistel}):
\begin{align*}
  (u_0, v_0) &= (x_L,x_R), &
  (u_{i}, v_{i}) &= (v_{i-1} \oplus R_i(u_{i-1}), u_{i-1}), &
  (y_L, y_R)   &= (u_3, v_3).
\end{align*}

\begin{figure}
  \begin{minipage}[b]{0.30\textwidth}
    \centering
    \begin{tikzpicture}[yscale=1.2]
      \node (xl) at (0,3.5) {$x_L$};
      \node (xr) at (2,3.5) {$x_R$};
      
      \node[fork] (y1) at (0,3) {};
      \node[fun]  (f1) at (1,3) {$R_1$};
      \node[xor]  (x1) at (2,3) {};

      \node[fork] (y2) at (0,2) {};
      \node[fun]  (f2) at (1,2) {$R_2$};
      \node[xor]  (x2) at (2,2) {};

      \node[fork] (y3) at (0,1) {};
      \node[fun]  (f3) at (1,1) {$R_3$};
      \node[xor]  (x3) at (2,1) {};

      \node (yl) at (0,0) {$y_L$};
      \node (yr) at (2,0) {$y_R$};

      \draw[->] (xr) -- (x1);
      \draw[->] (xl) -- (y1) -- (0,2.66) -- (2,2.33) -- (x2);

      \draw[->] (y1) -- (f1) -- (x1);
      \draw[->] (x1) -- (2,2.66) -- (0,2.33) -- (y2) -- (0,1.66) -- (2,1.33) -- (x3);

      \draw[->] (y2) -- (f2) -- (x2);
      \draw[->] (x2) -- (2,1.66) -- (0,1.33) -- (y3) -- (0,.66) -- (2,.33) -- (yr);

      \draw[->] (y3) -- (f3) -- (x3);
      \draw[->] (x3) -- (2,.66) -- (0,.33) -- (yl);

    \end{tikzpicture}
    
    \caption{Three-round Feistel scheme.}
    \label{fig:feistel}
  \end{minipage}
  \hfill%
  \begin{minipage}[b]{0.65\textwidth}
    \centering
    \begin{tikzpicture}
      \node[left]       (0in2) at (0,2) {$\ket b$};
      \node[left]       (0in1) at (0,1) {$\ket x$};
      \node[left]       (0in0) at (0,0) {$\ket 0$};
      \node[fun, minimum size=1cm]  (U1) at (1,2) {$U_{\alpha}$} edge[-] (0in2);
      \node[fun,minimum height = 2.5cm, minimum width = 1cm]  (F1) at (3,1) {$y_R$};
      \draw	(U1) -- (U1 -| F1.west);
      \draw	(0in1) -- (0in1 -| F1.west);
      \draw	(0in0) -- (0in0 -| F1.west);
      \node[fork]	(control) at (4,2) {} edge[-] (control -| F1.east) ;
      \node[xor]		(xor) at (4,0) {} edge[-] (control) edge[-] (xor -| F1.east);
      \node[fun, minimum size=1cm]  (U2) at (5,2) {$U_{\alpha}^{-1}$} edge[-] (U2 -| F1.east);
      \node[right]       (0out2) at (6,2) {$\ket b$} edge[-] (U2);
      \node[right]       (0out1) at (6,1) {$\ket x$} edge[-] (0out1 -| F1.east);
      \node[right]       (0out0) at (6,0) {$\ket{f(b,x)}$} edge[-] (xor);
    \end{tikzpicture}
    \caption[Simon's function for Feistel]{Simon's function for
      Feistel.\\ \hspace*{\linewidth}} 
    \label{feistelfunc}
  \end{minipage}
\end{figure}

In order to distinguish a Feistel scheme from a random permutation in a
quantum setting, Kuwakado and Morii~\cite{5513654} consider the case
were the $R_i$ are permutations, and define the following
function, with two arbitrary constants $\alpha_0$ and $\alpha_1$  such that $\alpha_0 \ne \alpha_1$:
\begin{align*}
  f: \{0,1\} \times \{0,1\}^{n} &\to \{0,1\}^{n} \\
  b,x \quad &\mapsto y_R \oplus \alpha_b, \quad \text{where $(y_R, y_L) = E(\alpha_b,x)$}\\
  f(b,x) &= R_2(x \oplus R_1(\alpha_b))
\end{align*}
In particular, this $f$ satisfies
$f(b,x) = f(b \oplus 1, x \oplus R_1(\alpha_0) \oplus R_1(\alpha_1))$.
Moreover,
\begin{align*}
f(b',x') = f(b,x)
&\Leftrightarrow x' \oplus R_1(\alpha_{b'}) = x \oplus R_1(\alpha_b)\\
&\Leftrightarrow
  \begin{cases}
      x' \oplus x = 0 & \text{if $b' = b$}\\
      x' \oplus x = R_1(\alpha_0)\oplus R_1(\alpha_1) & \text{if $b' \ne b$}
  \end{cases}
\end{align*}
Therefore, the function satisfies Simon's promise with $s =
1\|R_1(\alpha_0) \oplus R_1(\alpha_1)$, and we can recover $R_1(\alpha_0)
\oplus R_1(\alpha_1)$ using Simon's algorithm.  This gives a
distinguisher, because Simon's algorithm applied to a random permutation
returns zero with high probability.
\ifnopromise
This can be seen from Theorem 2, using the fact that with overwhelming
probability\cite{DBLP:journals/jmc/DaemenR07}, there is no value $t \ne 0$ such that $\Pr_x[f(x \oplus t)
= f(x)] > 1/2$ for a random permutation $f$.
\fi

We can also verify that the value $R_1(\alpha_0) \oplus R_1(\alpha_1)$
is correct with two additional classical queries $(y_L, y_R) = E(\alpha_0, x)$ and
$(y'_L, y'_R) = E(\alpha_1, x \oplus R_1(\alpha_0)
\oplus R_1(\alpha_1))$ for a random $x$.  If the value is
correct, we have $y_R \oplus y'_R = \alpha_0
\oplus \alpha_1$.

Note that in their attack, Kuwakado and Morii implicitly assume that the adversary can query in superposition
an oracle that returns solely the left part $y_L$ of the encryption.
If the adversary only has access to the complete encryption oracle $E$, then a query in superposition
would return two \emph{entangled} registers containing the left and right parts, respectively.
In principle, Simon's algorithm requires the register containing the input value to
be completely disentangled from the others.

\subsubsection{Feistel scheme with random functions.}  
Kuwakado and Morii~\cite{5513654} analyze only the case where the round
functions $R_i$ are permutations.  We now extend this analysis to \emph{random functions} $R_i$.  The function $f$ defined above still satisfies $f(b,x) = f(b
\oplus 1, x \oplus R_1(\alpha_0) \oplus R_1(\alpha_1))$, but it doesn't
satisfy the exact promise of Simon's algorithm: there are additional collisions in
$f$, between inputs with random differences.
However, the previous distinguisher is still valid: at the end of Simon's
algorithm, there exist at least one non-zero value orthogonal to all the
values $y$ measured at each step: $s$.  This would not be the case with
a random permutation.

Moreover, we can show that $\varepsilon(f,1\|s)<1/2$ with overwhelming
probability, so that Simon's algorithm still recovers $1\|s$ following
Theorem~\ref{th:approx}.
If $\varepsilon(f,1\|s) > 1/2$, there exists $(\tau,t)$ with $(\tau,t)
\not\in \{(0,0), (1,s)\}$ such that:
  $\mathrm{Pr}[f(b,x) = f(b\oplus \tau, x\oplus t)] > 1/2$.
Assume first that $\tau = 0$, this implies:
\begin{align*}
    \mathrm{Pr}[f(0,x) = f(0, x\oplus t)] > 1/2  \quad \text{or} \quad 
    \mathrm{Pr}[f(1,x) = f(1, x\oplus t)] > 1/2.
\end{align*}
Therefore, for some $b$, 
  $\Pr[R_2(x \oplus R_1(\alpha_b)) = R_2(x \oplus t \oplus R_1(\alpha_b))] > 1/2$, 
 \emph{i.e.} $\Pr[R_2(x) = R_2(x \oplus t)] > 1/2$.
Similarly, if $\tau = 1$,
  $\Pr[R_2(x \oplus R_1(\alpha_0)) = R_2(x \oplus t \oplus R_1(\alpha_1))] > 1/2$,
\emph{i.e.} $ \Pr[R_2(x) = R_2(x \oplus t \oplus R_1(\alpha_0) \oplus R_1(\alpha_1))] > 1/2$.

To summarize, if $\varepsilon(f,1\|s) > 1/2$, there exists $u \ne 0$
such that $\Pr[R_2(x) = R_2(x \oplus u)] > 1/2$.  This only happens with
negligible probability for a random choice of
$R_2$ as shown in~\cite{DBLP:journals/jmc/DaemenR07}.

\subsection{Application to the Even-Mansour construction}
\label{sec:even-mansour}

The Even-Mansour construction is a simple construction to build a
block cipher from a public permutation~\cite{DBLP:journals/joc/EvenM97}.
For some permutation $P$, the cipher is:
\[ E_{k_1,k_2}(x) = P(x \oplus k_1) \oplus k_2. \]
Even and Mansour have shown that this construction is secure in the
random permutation model, up to $2^{n/2}$ queries, where $n$ is the
size of the input to $P$.

\begin{figure}
  \begin{minipage}[b]{0.35\textwidth}
  \centering
  \begin{tikzpicture}[x={(0pt,-.75cm)},y={(-1cm,0pt)}]
      \node       (P0) at (-0.5,0) {$x$};
      \node      (K1) at (0.5,1) {$k_1$};
      \node[xor] (x0) at (0.5,0) {} edge[<-] (P0) edge[<-] (K1);
      \node[fun,minimum size=1cm] (p0) at (1.8,0) {$P$} edge[<-] (x0);
      \node      (k2) at (3,1) {$k_2$};
      \node[xor] (x1) at (3,0) {} edge[<-] (p0) edge[<-] (k2);
      \node	(out) at (4,0) {$E_{k_1,k_2}(x)$} edge[<-] (x1);
    \end{tikzpicture}
    \caption{Even-Mansour scheme.}
  \end{minipage}
  \hfill%
  \begin{minipage}[b]{0.60\textwidth}
    \centering
    \begin{tikzpicture}[xscale=0.66]
      \node[left] (a) at (1,0) {$|0\rangle$};
      \node[left] (b) at (1,1) {$|x\rangle$};

      \node (c) at (4,0) {$|E_k(x)\rangle$} edge (a);
      \node (d) at (4,1) {$|x\rangle$} edge (b);

      \node[right] (e) at (7,0) {$|E_k(x) \oplus P(x)\rangle$} edge (c);
      \node[right] (f) at (7,1) {$|x\rangle$} edge (d);

      \node[fun,minimum height=2cm, minimum width=1cm] (f) at (2,.5) {$E_k$};
      \node[fun,minimum height=2cm, minimum width=1cm] (f) at (6,.5) {$P$};
    \end{tikzpicture}
    \caption{Simon's function for Even-Mansour.}
  \end{minipage}
\end{figure}

However, Kuwakado and Morii~\cite{6400943} have shown that the security
of this construction collapses if an adversary can query an encryption
oracle with a superposition of states.  More precisely, they define the
following function:
\begin{align*}
  f: \{0,1\}^{n} &\to \{0,1\}^{n} \\
  x \quad &\mapsto E_{k_1,k_2}(x) \oplus P(x)= P(x \oplus k_1) \oplus P(x) \oplus k_2.
  \end{align*}
In particular, $f$ satisfies $f(x \oplus k_1) = f(x)$ (interestingly, 
the slide with a twist attack of Biryukov and Wagner\cite{DBLP:conf/eurocrypt/BiryukovW00} uses the same property).  
However, there are additional collisions in $f$
between inputs with random differences. As in the attack against the
Feistel scheme with random round functions, we
use Theorem~\ref{th:approx}, to show that Simon's algorithm recovers 
$k_1$\footnote{Note that Kuwakado and Morii just assume that each
  step of Simon's algorithm gives a random vector orthogonal to $k_1$.
  Our analysis is more formal and captures the conditions on $P$
  required for the algorithm to be successful.}.

We show that $\varepsilon(f,k_1)<1/2$ with overwhelming probability for
a random permutation $P$, and if $\varepsilon(f,k_1)>1/2$, then there
exists a classical attack against the Even-Mansour scheme.
Assume that $\varepsilon(f,k_1) > 1/2$, that is,
there exists $t$ with $t \not\in \{0, k_1\}$ such that $ \mathrm{Pr}[f(x) = f(x\oplus t)] > 1/2$, \emph{i.e.},
\begin{align*}
   p=\mathrm{Pr}[P(x) \oplus P(x\oplus k_1)  \oplus P(x\oplus t)  \oplus
  P(x\oplus t\oplus k_1) = 0] &> 1/2.
\end{align*}
This correspond to higher order differential for $P$ with probability
$1/2$, which only happens with negligible probability for a random choice
of $P$.  In addition, this would imply the existence of a simple
classical attack against the scheme:
\begin{enumerate}
\item Query $y = E_{k_1,k_2}(x)$ and $y' = E_{k_1,k_2}(x \oplus t)$
\item Then $y \oplus y' = P(x) \oplus P(x \oplus t)$ with probability at
  least one half
\end{enumerate}
Therefore, for any instantiation of the Even-Mansour scheme with a fixed
$P$,
either there exist a classical distinguishing attack (this only happens with
negligible probability with a random $P$), or Simon's algorithm successfully recovers
$k_1$.  In the second case, the value of $k_2$ can then be recovered
from an additional classical query: $k_2 = E(x) \oplus P(x \oplus k_1)$.

\medskip

In the next sections, we give new applications of Simon's algorithm, to
break various symmetric cryptography schemes.

\section{Application to the LRW construction}
\label{sec:LRW}

\begin{figure}
  {
    \setcounter{subfig}{0}
    \makeatletter
    \long\def\@makecaption#1#2{%
      \small
      \stepcounter{subfig}%
      \vskip\abovecaptionskip
      \sbox\@tempboxa{{\bfseries #1.} #2}%
      \ifdim \wd\@tempboxa >\hsize
      {\bfseries #1.} #2\par
      \else
      \global \@minipagefalse
      \hb@xt@\hsize{\hfil\box\@tempboxa\hfil}%
      \fi
      \vskip\belowcaptionskip\addtocounter{figure}{-1}}
    \let\theoldfigure=\thefigure
    \def\thefigure{\theoldfigure.\intcalcAdd{\thesubfig}{1}}
    \def\fnum@figure{\theoldfigure.\thesubfig}
    \makeatother
    \centering
    {
      \begin{minipage}[t]{0.30\textwidth}
        \centering
        \begin{tikzpicture}
          \node[fun,minimum size=1cm] (e) at (0,0) {$E_k$};
          \node[xor] (x) at (0,1) {};
          \node[xor] (y) at (0,-1) {};
          \node      (p) at (0,2) {$p$};
          \node      (c) at (0,-2) {$c$};
          \node      (t) at (-2,0) {$t$};
          \node[draw,minimum height=1cm] (h) at (-1,0) {$h$};

          \draw[->] (p) -- (x);
          \draw[->] (x) -- (e);
          \draw[->] (e) -- (y);
          \draw[->] (y) -- (c);
          \draw[->] (h) |- (x);
          \draw[->] (h) |- (y);
          \draw[->] (t) -- (h);
        \end{tikzpicture}
        
        \caption{LRW construction.}
        \label{fig:lrw}
      \end{minipage}
    }%
    \hfill%
    {
      \begin{minipage}[t]{0.30\textwidth}
        \centering
        \begin{tikzpicture}
          \node[fun,minimum size=1cm] (e) at (0,0) {$E_k$};
          \node[xor] (x) at (0,1) {};
          \node[xor] (y) at (0,-1) {};
          \node      (p) at (0,2) {$p$};
          \node      (c) at (0,-2) {$c$};
          \node      (t1) at (-2,1) {$2^t \cdot L$};
          \node[fork] (f) at (-1,1) {};

          \draw[->] (p) -- (x);
          \draw[->] (x) -- (e);
          \draw[->] (e) -- (y);
          \draw[->] (y) -- (c);
          \draw[->] (t1) -- (f) -- (x);
          \draw[->] (f) |- (y);
        \end{tikzpicture}
        
        \caption[XEX construction]{XEX construction.}
        \label{fig:xex}
      \end{minipage}
    }%
    \hfill%
    {
      \begin{minipage}[t]{0.30\textwidth}
        \centering
        \begin{tikzpicture}
          \node[fun,minimum size=1cm] (e) at (0,0) {$E_k$};
          \node[xor] (x) at (0,1) {};
          \node      (p) at (0,2) {$p$};
          \node      (c) at (0,-2) {$c$};
          \node      (t1) at (-2,1) {$2^t \cdot L$};

          \draw[->] (p) -- (x);
          \draw[->] (x) -- (e);
          \draw[->] (e) -- (c);
          \draw[->] (t1) -- (x);
        \end{tikzpicture}
        
        \caption[XE construction]{XE construction.}
        \label{fig:xe}
      \end{minipage}
    }%
  }
  \caption{The LRW construction, and efficient instantiations XEX (CCA secure)
    and XE (only CPA secure).}
  \label{fig:LRW}
\end{figure}

We now show a new application of Simon's algorithm to the LRW construction.
The LRW construction, introduced by Liskov, Rivest and
Wagner~\cite{DBLP:journals/joc/LiskovRW11}, turns a block cipher into a
tweakable block cipher, \emph{i.e.} a family of unrelated block ciphers.
The tweakable block cipher is a very useful primitive to build modes for
encryption, authentication, or authenticated encryption.  In particular,
tweakable block ciphers and the LRW construction were inspired by the
first version of OCB, and later versions of OCB use the tweakable block
ciphers formalism.  The LRW construction uses a (almost) universal hash
function $h$ (which is part of the key), and is defined as (see also Figure~\ref{fig:LRW}):
\[\widetilde{E}_{t,k}(x) = E_k(x \oplus h(t)) \oplus h(t).\]

We now show that the LRW construction is not secure in a quantum
setting.  We fix two arbitrary tweaks $t_0, t_1$, with $t_0 \ne t_1$,
and we define the following function:
\begin{align*}
  f: \{0,1\}^{n} &\to \{0,1\}^{n} \\
  x \quad &\mapsto \widetilde{E}_{t_0, k}(x) \oplus \widetilde{E}_{t_1,k}(x) \\
f(x)&= E_k\big(x \oplus h(t_0)\big) \oplus h(t_0) \oplus E_k\big(x \oplus h(t_1)\big) \oplus h(t_1).
\end{align*}
Given a superposition access to an oracle for an LRW tweakable block cipher,
we can build a circuit implementing this function, using the
construction given in Figure~\ref{fig:lrwcirc}. In the circuit,
the cryptographic oracle $\widetilde E_{t,k}$ takes two inputs: the block $x$ to be encrypted
and the tweak $t$. Since the tweak comes out of $\widetilde E_{t,k}$
unentangled with the other register,
we do not represent this output in the diagram. In practice, the output
is forgotten by the attacker.

It is easy to see that
this function satisfies $  f(x)   = f(x \oplus s)$ with $s = h(t_0) \oplus h(t_1)$.
Furthermore, the quantity
$\varepsilon(f,s) = \max_{t \in \{0,1\}^n \setminus\{0,s\}}
\mathrm{Pr}[f(x) = f(x\oplus t)]$ is bounded with overwhelming
probability, assuming that $E_k$ behaves as a random permutation.
Indeed if $\varepsilon(f,s) > 1/2$,
there exists some $t$ with $t \not\in \{0, s\}$ such that $\mathrm{Pr}[f(x) = f(x\oplus t)] > 1/2$, \emph{i.e.},
\[
  \mathrm{Pr}[
E_k\big(x\big) \oplus E_k\big(x \oplus s\big) \oplus
E_k\big(x \oplus t)\big) \oplus E_k\big(x \oplus t
\oplus s\big) = 0]
> 1/2
\]
This correspond to higher order differential for $E_k$ with probability
$1/2$, which only happens with negligible probability for a random permutation.
Therefore, if $E$ is a pseudo-random permutation family,
$\varepsilon(f,s) \le 1/2$ with overwhelming probability, and running Simon's algorithm with the function $f$ returns
$h(t_0) \oplus h(t_1)$.  The assumption that $E$ behaves as a PRP family is
required for the security proof of LRW, so it is reasonable to make the
same assumption in an attack.  More concretely, a block cipher with a
higher order differential with probability $1/2$ as seen above would
probably be broken by classical attacks.  The attack is not immediate
because the differential can depend on the key, but it would seem to
indicate a structural weakness.
\ifnopromise
In the following sections, some attacks can also be mounted using
Theorem~\ref{th:nopromise} without any assumptions on $E$.
\fi

In any case, there exist at least one non-zero value orthogonal to all
the values $y$ measured during Simon's algorithm: $s$.  This would not
be the case if $f$ is a random function, which gives a distinguisher
between the LRW construction and an ideal tweakable block cipher with
$O(n)$ quantum queries to~$\widetilde{E}$.

In practice, most instantiations of LRW use a finite field
multiplication to define the universal hash function $h$, with a secret
offset $L$ (usually computed as $L = E_k(0)$).  Two popular
constructions are:
\begin{itemize}
\item $h(t) = \gamma(t) \cdot L$, used in OCB1~\cite{DBLP:conf/ccs/RogawayBBK01}, OCB3~\cite{DBLP:conf/fse/KrovetzR11} and PMAC~\cite{DBLP:conf/eurocrypt/BlackR02}, with a
  Gray encoding $\gamma$ of $t$,
\item $h(t) = 2^t \cdot L$, the XEX construction, used in OCB2~\cite{DBLP:conf/asiacrypt/Rogaway04}.
\end{itemize}
In both cases, we can recover $L$ from the value $h(t_0) \oplus h(t_1)$
given by the attack.

This attack is important, because many recent modes of operation
are inspired by the LRW construction, and the XE and XEX instantiations,
such as CAESAR candidates AEZ~\cite{DBLP:conf/eurocrypt/HoangKR15},
COPA~\cite{DBLP:conf/asiacrypt/AndreevaBLMTY13},
OCB~\cite{DBLP:conf/fse/KrovetzR11},
OTR~\cite{DBLP:conf/eurocrypt/Minematsu14},
Minalpher~\cite{CAESAR_Minalpher},
OMD~\cite{DBLP:conf/sacrypt/CoglianiMNCRVV14}, and
POET~\cite{DBLP:conf/fse/AbedFFLLMW14}.  We will see in the next
section that variants of this attack can be applied to each of these
modes.

\begin{figure}
	\centering
	\begin{tikzpicture}
	\node       (b) at (0,2) {$\ket 0$};
	\node       (x) at (0,1) {$\ket x$};
	\node[fun, minimum size=1cm]		(alphab) at (1,2) {$U_t$} edge[-] (b);
	\node	(binpCBC) at (2.7,2) {} edge[-] (alphab);
	\node	(xinpO1) at (2.7,1) {} edge[-] (x);
	\node       (0) at (0,0) {$\ket 0$};
	\node	(0inpO1) at (2.7,0) {} edge[-] (0);	
	\node[fun, minimum width=1cm, minimum height=3cm] at (3,1) {$\widetilde E_{t_0,k}$};
	\node	(boutO1) at (4,2) {$\ket 1$};
	\node	(xoutO1) at (3.4,1) {};
	\node	(x0outO1) at (3.4,0) {};
	\node[fun, minimum size=1cm]		(oplusu) at (5.25,2) {$U_t$} edge[-] (boutO1);
	\node	(0inpO2) at (6.7,0) {} edge[-] (x0outO1);
	\node	(1inpO2) at (6.7,1) {} edge[-] (xoutO1);
	\node	(2inpO2) at (6.7,2) {} edge[-] (oplusu);
	\node[fun, minimum width=1cm, minimum height=3cm] at (7,1) {$\widetilde E_{t_1,k}$};
	\node	(0outO2) at (7.4,0) {};
	\node	(1outO2) at (7.4,1) {};
	\node	(2outO2) at (7.4,2) {};
	\node	(0out) at (9,0) {$\ket {f(x)}$} edge[-] (0outO2);
	\node	(bout) at (8.5,1) {$\ket x$} edge[-] (1outO2);
	\end{tikzpicture}
\caption{Simon's function for LRW.}\label{fig:lrwcirc}
\end{figure}

\section{Application to block cipher modes of operations}
\label{sec:modes}

We now give new applications of Simon's algorithm to the security of
block cipher modes of operations.  In particular, we show how to break
the most popular and widely used block-cipher based MACs, and message
authentication schemes: CBC-MAC (including variants such as XCBC~\cite{DBLP:conf/crypto/BlackR00},
OMAC~\cite{DBLP:conf/fse/IwataK03}, and CMAC~\cite{FIPS-800-38B}), GMAC~\cite{DBLP:conf/indocrypt/McGrewV04}, PMAC~\cite{DBLP:conf/eurocrypt/BlackR02}, GCM~\cite{DBLP:conf/indocrypt/McGrewV04} and OCB~\cite{DBLP:conf/fse/KrovetzR11}.  We also show attacks
against several CAESAR candidates.  In each case, the mode is proven
secure up to $2^{n/2}$ in the classical setting, but we show how, by a
reduction to Simon's problem, forgery attacks can be performed with
superposition queries at a cost of $O(n)$.

\textbf{Notations and preliminaries.}
We consider a block cipher $E_k$, acting on blocks of length $n$,
where the subscript $k$ denotes the key.  For simplicity, we only
describe the modes with full-block messages, the attacks can trivially
be extended to the more general modes with arbitrary inputs.  In
general, we consider a message $M$ divided into $\ell$ $n$-bits block:
$M=m_1 \| \ldots \| m_\ell$.  We also assume that the MAC is not
truncated, \emph{i.e.} the output size is $n$ bits.  In most cases,
the attacks can be adapted to truncated MACS.

\subsection{Deterministic MACs: CBC-MAC and PMAC}

We start with deterministic Message Authentication Codes, or MACs.  A
MAC is used to guarantee the authenticity of messages, and should
be immune against forgery attacks.  The standard security model is that it
should be hard to forge a message with a valid tag, even given access
to an oracle that computes the MAC of any chosen message (of course
the forged message must not have been queried to the oracle).

To translate this security notion to the quantum setting, we assume
that the adversary is given an oracle that takes a quantum
superposition of messages as input, and computes the superposition of
the corresponding MAC.

\subsubsection{CBC-MAC.}

CBC-MAC is one of the first MAC constructions, inspired by the CBC
encryption mode.  Since the basic CBC-MAC is only secure when the
queries are prefix-free, there are many variants of CBC-MAC to provide
security for arbitrary messages.  In the following we describe the
Encrypted-CBC-MAC variant~\cite{DBLP:journals/jcss/BellareKR00}, using two keys $k$ and $k'$, but the attack
can be easily adapted to other variants \cite{DBLP:conf/crypto/BlackR00,DBLP:conf/fse/IwataK03,FIPS-800-38B}.  On a message
$M=m_1 \| \ldots \| m_\ell$, CBC-MAC is defined as (see Figure~\ref{fig:cbc-mac}):
\begin{align*}
  x_0 &= 0&
  x_i &= E_{k}(x_{i-1} \oplus m_i)&
  \CBCMAC(M) = E_{k'}(x_{\ell})
\end{align*}
\begin{figure}
  \centering
  \begin{tikzpicture}
      \node       (k) at (0,0) {0};
      \node      (m0) at (1,1) {$m_1$};
      \node[xor] (x0) at (1,0) {} edge[<-] (k) edge[<-] (m0);
      \node[fun,minimum size=1cm] (p0) at (2,0) {$E_k$} edge[<-] (x0);
      \node      (m1) at (3,1) {$m_2$};
      \node[xor] (x1) at (3,0) {} edge[<-] (p0) edge[<-] (m1);
      \node[fun,minimum size=1cm] (p1) at (4,0) {$E_k$} edge[<-] (x1);
      \node      (m2) at (5,1) {$m_3$};
      \node[xor] (x2) at (5,0) {} edge[<-] (p1) edge[<-] (m2);
      \node[fun,minimum size=1cm] (p2) at (6,0) {$E_k$} edge[<-] (x2);
      \node[fun,minimum size=1cm] (p3) at (7.5,0) {$E_{k'}$} edge[<-] (p2);
      \node      (t)  at (9,0) {$\tau$} edge[<-] (p3);
  \end{tikzpicture}
  
  \caption{Encrypt-last-block CBC-MAC.}
  \label{fig:cbc-mac}
\end{figure}

CBC-MAC is standardized and widely used.  It has been proved to be
secure up to the birthday bound~\cite{DBLP:journals/jcss/BellareKR00},
assuming that the block cipher is indistinguishable from a random
permutation.
 
\subsubsection{Attack.}

We can build a powerful forgery attack on CBC-MAC with very low
complexity using superposition queries.  We fix two arbitrary message
blocks $\alpha_0, \alpha_1$, with $\alpha_0 \ne \alpha_1$, and we
define the following function:
\begin{align*}
  f: \{0,1\} \times \{0,1\}^{n} &\to \{0,1\}^{n} \\
  b, x \quad &\mapsto
    \CBCMAC(\alpha_b \| x) = E_{k'}\left(E_k\big(x \oplus E_k(\alpha_b)\big)\right).
 \end{align*}
The function $f$ can be computed with a single call to the
cryptographic oracle, and we can build a quantum circuit for $f$ given
a black box quantum circuit for $\CBCMAC_k$.  Moreover, 
$f$ satisfies  the promise of Simon's problem with $s = 1\|
E_k(\alpha_0) \oplus E_k(\alpha_1)$:
\begin{align*}
  f(0, x) &= E_{k'}(E_k(x \oplus E_k(\alpha_1))),\\
  f(1, x) &= E_{k'}(E_k(x \oplus E_k(\alpha_0))),\\
  f(b, x) &= f(b \oplus 1, x \oplus E_k(\alpha_0) \oplus E_k(\alpha_1)).
\shortintertext{More precisely:}
  f(b',x') = f(b,x) &\Leftrightarrow x \oplus E_k(\alpha_b) = x' \oplus E_k(\alpha_{b'})\\
&\Leftrightarrow \begin{cases}
      x' \oplus x = 0 & \text{if $b' = b$}\\
      x' \oplus x = E_k(\alpha_0)\oplus E_k(\alpha_1) & \text{if $b' \ne b$}
    \end{cases}
\end{align*}
Therefore, an application of Simon's
algorithm returns $E_k(\alpha_0) \oplus E_k(\alpha_1)$.
This allows to forge messages easily:
\begin{enumerate}
\item Query the tag of $\alpha_0 \| m_1$ for an arbitrary block
  $m_1$;
\item The same tag is  valid for $\alpha_1 \| m_1 \oplus E_k(\alpha_0) \oplus
  E_k(\alpha_1)$.
\end{enumerate}
In order to break the formal notion of EUF-qCMA security, we must
produce $q+1$ valid tags with only $q$ queries to the oracle.  Let
$q' = O(n)$ denote the number of of quantum queries made to learn
$E_k(\alpha_0)\oplus E_k(\alpha_1)$.  
The attacker will repeats the forgery step step $q'+1$ times, in order
to produce $2(q'+1)$ messages with valid tags, after a total of $2q'+1$
classical and quantum queries to the cryptographic oracle. Therefore, $\CBCMAC$
is broken by a quantum existential forgery attack.

\smallskip

After some exchange at early stages of the work, an extension of this forgery attack has been found by Santoli and Schaffner~\cite{SS}.
Its main advantage is to handle oracles that accept input of fixed length,
while our attack works for oracles accepting messages of variable length.

\hide{
\begin{figure}
  \centering
  \begin{tikzpicture}
      \node       (k) at (0,0) {0};
      \node      (alpha0) at (1,1.5) {$\alpha_0$ if $b=0$};
      \node      (alpha1) at (1,1) {$\alpha_1$ if $b=1$};
      \node[xor] (x0) at (1,0) {} edge[<-] (k) edge[<-] (m0);
      \node[fun,minimum size=1cm] (p0) at (2.5,0) {$E_k$} edge[<-] (x0);
      \node      (xx0) at (4,1.5) {$x$};
      \node[xor] (x1) at (4,0) {} edge[<-] (p0) edge[<-] (xx0);
      \node[fun,minimum size=1cm] (p1) at (5.5,0) {$E_k$} edge[<-] (x1);
  \end{tikzpicture}
  
  \caption{Representation of the generic quantum forgery attack on
    CBC-MAC.~\label{fig:CBC-Simon} }
\end{figure}
}

\subsubsection{PMAC.}

PMAC is a parallelizable block-cipher based MAC designed by
Rogway~\cite{DBLP:conf/asiacrypt/Rogaway04}.  PMAC is based on the XE construction: the
construction uses secret offsets $\Delta_i$ derived from the secret
key to turn the block cipher into a tweakable block cipher.  More
precisely, the PMAC
algorithm is defined as
\begin{align*}
  c_i &= E_k(m_i \oplus \Delta_i) &  \PMAC(M) = E^*_k\big(m_{\ell} \oplus \sum c_i\big)
\end{align*}
where $E^*$ is a tweaked variant of $E$. We omit the generation of
the secret offsets because they are irrelevant to our attack.

\subsubsection{First attack.}
When PMAC is used with two-block messages, it has the same structure
as CBC-MAC: $\PMAC(m_1\|m_2) = E^*_k(m_2 \oplus E_k(m_1 \oplus
\Delta_0))$.  Therefore we can use the attack of the previous section
to recover $E_k(\alpha_0) \oplus E_k(\alpha_1)$ for arbitrary values
of $\alpha_0$ and $\alpha_1$.  Again, this leads to
a simple forgery attack.
First, query the tag of $\alpha_0 \| m_1 \| m_2$ for arbitrary blocks
  $m_1$, $m_2$.
The same tag is  valid for $\alpha_1 \| m_1 \| m_2 \oplus E_k(\alpha_0) \oplus
  E_k(\alpha_1)$.
As for $\CBCMAC$, these two steps can be repeated $t+1$ times, where $t$ is the number of quantum
queries issued. The adversary then produces $2(t+1)$ messages after only $2t+1$ queries to the cryptographic oracle.

\subsubsection{Second attack.}

We can also build another forgery attack on PMAC where we recover the
difference between two offsets $\Delta_i$, following the attack
against LRW given in Section~\ref{sec:LRW}.  More precisely, we use
the following function:
\begin{align*}
  f: \{0,1\}^{n} &\to \{0,1\}^{n} \\
  m \quad &\mapsto \PMAC(m \| m \| 0^n) = E_k^*\left(E_k(m \oplus \Delta_0) \oplus E_k(m \oplus \Delta_1)\right).
\end{align*}
In particular, it satisfies $f(m \oplus s) = f(m)$ with $s = \Delta_0 \oplus \Delta_1$.
Furthermore, we can show that $\varepsilon(f,s) \le 1/2$ when $E$
is a good block cipher\footnote{Since this attack is just a special case of the
LRW attack of Section~\ref{sec:LRW}, we don't repeat the detailed
proof.}, and we can apply Simon's algorithm to recover $\Delta_0 \oplus
\Delta_1$.
This allows to create forgeries as follows:
\begin{enumerate}
\item Query the tag of $m_1 \| m_1$ for an arbitrary block
  $m_1$;
\item The same tag is  valid for $m_1 \oplus \Delta_0 \oplus \Delta_1
  \| m_1 \oplus \Delta_0 \oplus \Delta_1$.
\end{enumerate}

As mentioned in Section~\ref{sec:LRW}, the offsets in PMAC are defined
as $\Delta_i = \gamma(i) \cdot L$, with $L = E_k(0)$ and $\gamma$ a Gray
encoding.  This allows to recover $L$ from $\Delta_0 \oplus \Delta_1$, as
$L = (\Delta_0 \oplus \Delta_1) \cdot \left(\gamma(0) \oplus
  \gamma(1)\right)^{-1}$.  Then we can compute all the values
$\Delta_i$, and forge arbitrary messages.

\ifnopromise
\smallskip
We can also mount an attack without any assumption on
$\varepsilon(f,s)$, using Theorem~\ref{th:nopromise}.  Indeed, with a
proper choice of parameters, Simon's algorithm will return a value
$t \ne 0$ that satisfies $\Pr_x[ f(x \oplus t) = f(x)] \ge 1/2$.  This
value is not necessarily equal to $s$, but it can also be used to create
forgeries in the same way, with success probability at least $1/2$.
\fi

\begin{figure}
{
\begin{minipage}[t]{0.4\textwidth}
\centering
	\begin{tikzpicture}
	\node       (b) at (0,2) {$\ket b$};
	\node[fun, minimum size=1cm]		(alphab) at (1,2) {$U_\alpha$} edge[-] (b);
	\node	(binpCBC) at (2.7,2) {} edge[-] (alphab);
	\node       (x) at (0,1) {$\ket x$};
	\node	(xinpCBC) at (2.7,1) {} edge[-] (x);
	\node       (0) at (0,0) {$\ket 0$};
	\node	(xinpCBC) at (2.7,0) {} edge[-] (0);	
	\node[fun, minimum width=3cm, minimum height=1cm, rotate=90] at (2.5,1) {CBC-MAC};
	\node	(boutCBC) at (2.9,2) {};
	\node	(xoutCBC) at (2.9,1) {};
	\node	(x0outCBC) at (2.9,0) {};
	\node[fun, minimum size=1cm]		(alphabm1) at (4,2) {$U_\alpha^{-1}$} edge[-] (boutCBC);
	\node       (b) at (5,2) {$\ket b$} edge[-] (alphabm1);
	\node       (b) at (5,1) {$\ket x$} edge[-] (xoutCBC);
	\node       (b) at (5,0) {$\ket{f(b,x)}$} edge[-] (x0outCBC); 
	\end{tikzpicture}
\caption{Simon's function for CBC-MAC.}
\end{minipage}
} \hfill
{
\begin{minipage}[t]{0.4\textwidth}
	\centering
	\begin{tikzpicture}
	\node[left]       (m)    at (0.5,3) {$\ket m$};
	\node[left]       (0in2) at (0.5,2) {$\ket 0$};
	\node[left]       (0in1) at (0.5,1) {$\ket 0$};
	\node[left]       (0in0) at (0.5,0) {$\ket 0$};
	\node[fork]	(control) at (1,3) {} edge[-] (m);
	\node[xor]		(xor) at (1,2) {} edge[-] (0in2) edge[-] (control);
	\node	(minPM) at (1.7,3) {} edge[-] (control);
	\node	(0inPM2) at (1.7,2) {} edge[-] (xor);
	\node	(0inPM1) at (1.7,1) {} edge[-] (0in1);
	\node	(0inPM0) at (1.7,0) {} edge[-] (0in0);	
	\node[fun, rotate=90, minimum width=4cm, minimum height=1cm] at (2,1.5) {PMAC};
	\node	(moutPM) at (2.4,3) {};
	\node	(0outPM2) at (2.4,2) {};
	\node	(0outPM1) at (2.4,1) {};
	\node	(0outPM0) at (2.4,0) {};
	\node[fork]	(control2) at (3,3) {} edge[-] (moutPM);
	\node[xor]		(xor2) at (3,2) {} edge[-] (0outPM2) edge[-] (control2);
	\node[right]       (mout)  at (3.5,3) {$\ket m$} edge[-] (control2);
	\node[right]       (0out2) at (3.5,2) {$\ket 0$} edge[-] (xor2);
	\node[right]       (out1)  at (3.5,1) {$\ket 0$} edge[-] (0outPM1);
	\node[right]       (out1)  at (3.5,0) {$\ket{f(b,x)}$} edge[-] (0outPM0); 
	\end{tikzpicture}
		\caption{Simon's function for the second attack against PMAC.}\label{pmac2}

	\end{minipage}}
\end{figure}

\hide{
We can also build another forgery attack on PMAC with a different structure.  
The aim of the attack is to define a function $f$ satisfying the promise of Simon's problem and
such that finding the value hidden by
the function implies that the internal state after the two first
blocks ($E_k(m_0 \oplus \Delta_0) \oplus E_k(m_1 \oplus \Delta_1)$) collides.

\begin{figure}
  \centering
  \begin{tikzpicture}[xscale=1.5]
     \node       (b0) at (-1.5,2) {if $b = 0$};
      \node       (b1) at (-1.5,1.5) {if $b = 1$};
      
      \node      (m00) at (0,1.5) {$m_1$};
      \node      (m10) at (2,1.5) {$m_0$};
      \node      (m20) at (4,1.5) {$m_2$};
      \node      (m30) at (6,1.5) {$m_3$};
      
      \node      (m01) at (0,2) {$m_0$};
      \node      (m11) at (2,2) {$m_1$};
      
      \node 	     (D1) at (-0.5,0.5) {$\Delta_0$};
      \node 	     (D2) at (1.5,0.5) {$\Delta_1$};
      \node 	     (D3) at (3.5,0.5) {$\Delta_2$};
      
      \node[xor] (x10) at (0,0.5) {} edge[<-] (m00);
      \node[xor] (x20) at (2,0.5) {} edge[<-] (m10);
      \node[xor] (x30) at (4,0.5) {} edge[<-] (m20);
      \draw[->]  (D1) -- (x10);
      \draw[->]  (D2) -- (x20);
      \draw[->]  (D3) -- (x30);
            
      \node[fun,minimum size=1cm] (p0) at (0,-1) {$E_k$} edge[<-] (x10);
      \node[fun,minimum size=1cm] (p1) at (2,-1) {$E_k$} edge[<-] (x20);
      \node[fun,minimum size=1cm] (p2) at (4,-1) {$E_k$} edge[<-] (m20);
      \node[xor] (x1) at (2,-2) {};
      \node[xor] (x2) at (4,-2) {} edge[<-] (x1);
      \node 		(ppp) at (5,-2) {$\ldots$} edge[<-] (x2);
      \node[xor] (x3) at (6,-2) {} edge[<-] (ppp);
      \draw[->]  (p0) |- (x1);
      \draw[->]  (p1) -- (x1);
      \draw[->]  (p2) -- (x2);
      \draw[->]  (m30) -- (x3);
      \node[fun,minimum size=1cm] (px) at (6,-3) {$E_k^*$} edge[<-] (x3);
      \node      (t)  at (6,-4) {$\tau$} edge[<-] (px);
  \end{tikzpicture}
  \caption{Representation of the forgery attack on PMAC with  arbritary number of blocks.~\label{pmac2}}
\end{figure}

Any pair of messages of the form $m_0\|m_1\|m_\mathrm{left}$ and $m_0'\|m_1'\|m_\mathrm{left}$ with $m_0'=\Delta_0\oplus \Delta_1\oplus m_1$ and $m_1'=\Delta_0\oplus \Delta_1\oplus m_0$
will have the same tag. To build a forgery, it is therefore sufficient to guess the value of $\Delta := \Delta_0 \oplus \Delta_1$. However, since its value depends on the secret key, we do not know how to find it in a classical world.

In the post-quantum world, it is possible to recover the value $\Delta$
by constructing an oracle that satisfies the promise of Simon's problem.
With this value, computed with the secret key, we can build a forgery 
by querying the tag for the message $m_0\|m_1\|m_\mathrm{left}$, who also
applies to $m_0 \oplus \Delta \| m_1 \oplus \Delta \|m_\mathrm{left}$.

We now describe the function $f$ used to recover $\Delta$.
As described in Appendix~\ref{app:col}, we concatenate 
a few evaluations of PMAC in order to have an output big enough to discard random collisions, we can concatenate  as many MAC computation as needed. We present the function $f$ with only two concatenations,
but it can be arbitrarily increased.

\begin{align*}
  f: \{0,1\} \times \{0,1\}^{2n} &\to \{0,1\}^{2n} \\
  b, m_0 \|m_1 \quad &\mapsto
  \begin{cases}
    PMAC(m_0\|m_1\| 0^n)\|PMAC(m_0\oplus c_0\|m_1\|0^n) & \text{if $b = 0$}\\
     PMAC(m_1\|m_0\|0^n)\|PMAC(m_1\oplus c_0\|m_0\|0^n)& \text{if $b = 1$}
  \end{cases}
\end{align*}

The function $f$ satisfies $f(0,m_0||m_1)=f(1, m_0 \oplus \Delta, m_1 \oplus \Delta)$.
}
\hide{
With this function, for $b=0$, $M_0, M_1$ represent $m_0,m_1$ in the figure. When $b=1$ , $M_0, M_1$ represent $m_0+\Delta,m_1+\Delta$, which implies that the when we obtain a collision in the tags for different inputs, the xor of both inputs is $\Delta,\Delta$, which is the value recovered by Simon's algorithms (also the other way round: when the xor of two inputs is $\Delta,\Delta$, we are sure to obtain a collision on the tags). 
 }
\hide{
Once we have recovered $\Delta$ using to Simon's algorithm, we can build multiple forgeries with just one call to the scheme:
$m= m_0 \|m_1 \| m_\mathrm{left} $  and $m'=m_1+\Delta\|m_0+\Delta \| m_\mathrm{left} $ have the same tag $\tau$. It suffices to query the tag for $m$ and compute $m'$ using the attack previously described.
This attack has complexity $O(n)$.

Notice that in this forgery, only the two first blocks differ, but the same attack applied
mutatis mutandis allows to find a pair of message sthat differ on two arbitrary blocks,
but have the same tag.
}

\subsection{Randomized MAC: GMAC}

GMAC is the underlying MAC of the widely used GCM standard, designed by
McGrew and Viega~\cite{DBLP:conf/indocrypt/McGrewV04}, and standardized
by NIST.  GMAC follows the Carter-Wegman
construction~\cite{DBLP:conf/stoc/CarterW77}: it is built from a
universal hash function, using polynomial evaluation in a Galois field.
As opposed to the constructions of the previous sections, GMAC is a
randomized MAC; it requires a second input $N$, which must be
non-repeating (a nonce).  GMAC is essentially defined as:
\begin{align*}
  \GMAC(N,M) &= \text{GHASH}(M \| \text{len}(M)) \oplus E_k(N||1) \\
  \text{GHASH}(M) &= \sum_{i=1}^{\text{len}(M)} m_i \cdot H^{\text{len}(M)-i+1}  \quad \text{with} \,   H = E_k(0),
\end{align*}
where $\text{len}(M)$ is the length of $M$.

\begin{figure}
  \centering
  \begin{tikzpicture}
      \node       (k) at (0,0) {0};
      \node      (m0) at (1,1) {$m_1$};
      \node[xor] (x0) at (1,0) {} edge[<-] (k) edge[<-] (m0);
      \node[fun,minimum size=1cm] (p0) at (2,0) {$\odot H$} edge[<-] (x0);
      \node      (m1) at (3,1) {$m_2$};
      \node[xor] (x1) at (3,0) {} edge[<-] (p0) edge[<-] (m1);
      \node[fun,minimum size=1cm] (p1) at (4,0) {$\odot H$} edge[<-] (x1);
      \node      (m2) at (5,1) {$\text{len}(M)$};
      \node[xor] (x2) at (5,0) {} edge[<-] (p1) edge[<-] (m2);
      \node[fun,minimum size=1cm] (p2) at (6,0) {$\odot H$} edge[<-] (x2);
      \node[xor] (x) at (7.5,0) {} edge[<-] (p2);
      \node[fun,minimum size=1cm] (e) at (7.5,1) {$E_{k}$} edge[->] (x);
      \node at (7.5,2) {$N\|1$} edge[->] (e);
      \node      (t)  at (9,0) {$\tau$} edge[<-] (x);
  \end{tikzpicture}
  
  \caption{GMAC}
  \label{fig:gmac}
\end{figure}

\subsubsection{Attack.}

When the polynomial is evaluated with Horner's rule, the structure of
GMAC is similar to that of CBC-MAC (see Figure~\ref{fig:gmac}).  For a two-block message,
we have $\GMAC(m_1\|m_2) = \big((m_1 \cdot H) \oplus m_2\big) \cdot H
\oplus E_k(N\|1)$.  Therefore, we us the same
$f$ as in the CBC-MAC attack, with fixed blocks $\alpha_0$ and $\alpha_1$:
\begin{align*}
  f_N: \{0,1\} \times \{0,1\}^{n} &\to \{0,1\}^{n} \\
  b, x  \quad &\mapsto
    \GMAC(N, \alpha_b\|x)= \alpha_b \cdot H^2 \oplus x \cdot H \oplus E_k(N||1).
\end{align*}
In particular, we have:
\begin{align*}
  f(b',x') = f(b,x) &\Leftrightarrow
\alpha_b \cdot H^2 \oplus x \cdot H = \alpha_{b'} \cdot H^2 \oplus x' \cdot H \\
  &\Leftrightarrow \begin{cases}
    x' \oplus x = 0 & \text{if $b' = b$}\\
    x' \oplus x = (\alpha_0 \oplus \alpha_1)\cdot H & \text{if $b' \ne b$}\\
  \end{cases}
\end{align*}
Therefore $f_N$ satisfies the promise of Simon's algorithm with $s = 1 \| (\alpha_0
\oplus \alpha_1) \cdot H$.

\subsubsection{Role of the nonce.}  There is an important caveat
regarding the use of the nonce.
In a classical setting, the nonce is chosen by the adversary under the
constraint that it is non-repeating, \emph{i.e.} the oracle computes $N,
M \mapsto \GMAC(N, M)$.  However, in the quantum setting, we don't have
a clear definition of non-repeating if the nonce can be in
superposition.  To sidestep the issue, we use a weaker security notion
where the nonce is chosen at random by the oracle, rather than by the
adversary (following the IND-qCPA definition of
\cite{DBLP:conf/crypto/BonehZ13}).  The oracle is then $M \mapsto (r,
\GMAC(r,M))$.  If we can break the scheme in this model, the attack will
also be valid with any reasonable CPA security definition.

In this setting we can access the function $f_N$ only for a random value
of $N$.  In particular, we cannot apply Simon's algorithm as is, because
this requires $O(n)$ queries to the \emph{same} function $f_N$.
However, a single step of Simon's algorithm requires a single query to
the $f_N$ function, and returns a vector orthogonal to $s$, for any
random choice of $N$.  Therefore, we can recover $(\alpha_0 \oplus \alpha_1)
\cdot H$ after $O(n)$ steps, even if each step uses a different value of
$N$.  Then, we can recover $H$ easily, and it is easy to generate
forgeries when $H$ is known:
\begin{enumerate}
\item Query the tag of $N, m_1 \| m_2$ for arbitrary blocks
  $m_1$, $m_2$ (under a random nonce $N$).
\item The same tag is valid for $m_1 \oplus 1 \| m_2 \oplus H$ (with the
  same nonce $N$).
\end{enumerate}
As for $\CBCMAC$, repeating these two steps leads to an existential forgery attack.
\subsection{Classical Authenticated Encryption Schemes: GCM and OCB}

We now give applications of Simon's algorithm to break the security of
standardized authenticated encryption modes.  The attacks are similar to the
attacks against authentication modes, but these authenticated encryption
modes are nonce-based.  Therefore we have to pay special attention to
the nonce, as in the attack against GMAC.  In the following, we assume
that the nonce is randomly chosen by the MAC oracle, in order to avoid
issues with the definition of non-repeating nonce in a quantum setting.

\subsubsection{Extending MAC attacks to authenticated encryption schemes.}

We first present a generic way to apply MAC attacks in the context of an
authenticated encryption scheme.  More precisely, we assume that the tag
of the authenticated encryption scheme is computed as $f(g(A), h(M,N))$,
\emph{i.e.} the authentication of the associated data $A$ is independent
of the nonce $N$.  This is the case in many practical schemes
(\emph{e.g.} GCM, OCB) for efficiency reasons.

In this setting, we can use a technique similar to our attack against
GMAC: we define a function $M \mapsto f_N(M)$ for a fixed nonce $N$,
such that for any nonce $N$, $f_N(M) = f_N(M \oplus \Delta)$ for some
secret value $\Delta$.  Next we use Simon's algorithm to recover
$\Delta$, where each step of Simon's algorithm is run with a
random nonce, and returns a vector orthogonal to $\Delta$.  Finally, we
can recover $\Delta$, and if $f_N$ was carefully built, the knowledge of
$\Delta$ is sufficient for a forgery attack.

The CCM mode is a notable exception, where all the
computations depend on the nonce.  In particular, there is no obvious
way to apply our attacks to CCM.

\subsubsection{Extending GMAC attack to GCM.}
GCM is one of the most widely used authenticated encryption modes, designed
by McGrew and Viega~\cite{DBLP:conf/indocrypt/McGrewV04}.  GMAC is the
composition of the counter mode for encryption with GMAC (computed over
the associated data and the ciphertext) for authentication.

In particular, when the message is empty, GCM is just GMAC, and we can
use the attack of the previous section to recover the hash
key $H$.  This immediately allows a forgery attack.

\subsubsection{OCB.}
OCB is another popular authenticated encryption mode, with a very high
efficiency, designed by Rogaway \emph{et al.}~\cite{DBLP:conf/ccs/RogawayBBK01,DBLP:conf/asiacrypt/Rogaway04,DBLP:conf/fse/KrovetzR11}.  Indeed, OCB requires only $\ell$ block cipher calls to
process an $\ell$-block message, while GCM requires $\ell$ block cipher
calls, and $\ell$ finite field operations.  OCB is build from the LRW
construction discussed in Section~\ref{sec:LRW}.  OCB takes as input a
nonce $N$, a message $M = m_1 \| \ldots \| m_\ell$, and associated data
$A = a_1 \| \ldots a_{\text{\emph{@}}}$, and returns a ciphertext $C =
c_1 \| \ldots \| c_\ell$ and a tag $\tau$:
\begin{align*}
  c_i = E_k(m_i \oplus \Delta^N_i) \oplus \Delta^N_i, \quad 
  \tau = E_k\left(\Delta'^N_\ell \oplus \sum m_i\right) \oplus \sum b_i, \quad 
  b_i = E_k(a_i \oplus \Delta_i).
\end{align*}

\subsubsection{Extending PMAC attack to OCB.}
In particular, when the message is empty, OCB reduces to a randomized
variant of PMAC:
\begin{align*}
  \OCB_k(N, \varepsilon, A) &= \phi_k(N) \oplus \sum b_i, &
  b_i &= E_k(a_i \oplus \Delta_i).
\end{align*}
Note that the $\Delta_i$ values used for the associated data are
independent of the nonce $N$.
Therefore, we can apply the second PMAC attack previously given, using
the following function:
\begin{align*}
  f_N: \{0,1\}^{n} &\to \{0,1\}^{n} \\
  x \quad &\mapsto \OCB_k(N, \varepsilon, x \| x) \\
  f_N(x)    &=E_k(x \oplus \Delta_0) \oplus E_k(x \oplus \Delta_1) \oplus \phi_k(N)
\end{align*}
Again, this is a special case of the LRW attack of
Section~\ref{sec:LRW}.  The family of functions satisfies
$f_N(a \oplus \Delta_0 \oplus \Delta_1) = f_N(a)$, for any $N$, and
$\varepsilon(f_N,\Delta_0 \oplus \Delta_1) \le 1/2$ with overwhelming
probability if $E$ is a PRP.  Therefore we can use the variant of
Simon's algorithm to recover $\Delta_0 \oplus \Delta_1$.  Two messages with valid tags
can then be generated by a single classical queries:
\begin{enumerate}
\item Query the authenticated encryption $C,\tau$ of $M, a \| a$ for an arbitrary
  message $M$, and an arbitrary block $a$ (under a random nonce $N$).
\item $C,\tau$ is also a valid authenticated encryption of $M, a \oplus \Delta_0 \oplus \Delta_1
  \| a \oplus \Delta_0 \oplus \Delta_1$, with the same nonce $N$.
\end{enumerate}
Repeating these steps lead again to an existential forgery attack.
\subsubsection{Alternative attack against OCB.}

For some versions of OCB, we can also mount a different attack targeting the encryption part
rather than the authentication part.  The goal of this attack is also to
recover the secret offsets, but we target the $\Delta^N_i$ used for the
encryption of the message.
More precisely, we use the following function:
\begin{align*}
  f_i: \{0,1\}^{n} &\to \{0,1\}^{n} \\
  m \quad &\mapsto c_1 \oplus c_2, \text{where $(c_1, c_2, \tau) = \OCB_k(N, m \| m, \varepsilon)$} \\
  f_i(m)    &=E_k(m \oplus \Delta^N_{1}) \oplus \Delta^N_{1} \oplus E_k(m \oplus \Delta^N_{2}) \oplus \Delta^N_{2}
\end{align*}
This function satisfies $f_N(m \oplus \Delta^N_1 \oplus \Delta^N_2) =
f_N(m)$ and $\varepsilon(f_N,\Delta^N_0 \oplus \Delta^N_1) \le 1/2$, with
the same arguments as previously.
Moreover, in OCB1 and OCB3, the offsets are derived as $\Delta^N_i =
\Phi_k(N) \oplus \gamma(i) \cdot E_k(0)$ for some function $\Phi$ (based
on the block cipher $E_k$).  In particular,
$\Delta^N_1 \oplus \Delta^N_2$ is independent of $N$:
\[
\Delta^N_1 \oplus \Delta^N_2 = (\gamma(1) \oplus \gamma(2)) \cdot E_k(0).
\]
Therefore, we can apply Simon's algorithm to recover $\Delta^N_1 \oplus \Delta^N_2$.
Again, this leads to a forgery attack, by repeating the following two steps:
\begin{enumerate}
\item Query the authenticated encryption $c_1\|c_2,\tau$ of $m\|m, A$
  for an arbitrary block $m$, and arbitrary associated data
  $A$ (under a random nonce $N$).
\item $c_2 \oplus \Delta^N_0 \oplus \Delta^N_1 \| c_1 \oplus \Delta^N_0 \oplus
  \Delta^N_1, \tau$ is also a valid authenticated encryption of $m \oplus \Delta^N_0 \oplus \Delta^N_1
  \| m \oplus \Delta^N_0 \oplus \Delta^N_1, A$ with the same nonce $N$.
\end{enumerate}
The forgery is valid because we swap the inputs of the first and second
block ciphers.  In addition, we have 
$\sum m_i = \sum m'_i$, so that the tag is still valid.

\subsection{New Authenticated Encryption Schemes: CAESAR Candidates}

In this section, we consider recent proposals for authenticated
encryption, submitted to the ongoing CAESAR competition. Secret key
cryptography has a long tradition of competitions: AES and \mbox{SHA-3}
for example, were chosen after the NIST competitions organized in 1997
and 2007, respectively.  The CAESAR
competition\footnote{\url{http://competitions.cr.yp.to/}} aims at stimulating
research on authenticated encryption schemes, and to define a portfolio
of new authenticated encryption schemes.  The competition is currently
in the second round, with 29 remaining algorithms.

First, we point out that the attacks of the previous sections can be used
to break several CAESAR candidates:
\begin{itemize}
\item CLOC~\cite{DBLP:conf/fse/IwataM0M14} uses CBC-MAC to authenticate the message, and the associated
  data is processed independently of the nonce.  Therefore, the CBC-MAC
  attack can be extended to CLOC\footnote{This is not the case for the
    related mode SILC, because the nonce is processed before the data in
    CBC-MAC.}.
\item AEZ~\cite{DBLP:conf/eurocrypt/HoangKR15}, COPA~\cite{DBLP:conf/asiacrypt/AndreevaBLMTY13}, OTR~\cite{DBLP:conf/eurocrypt/Minematsu14} and POET~\cite{DBLP:conf/fse/AbedFFLLMW14} use a variant of PMAC to authenticate the associated
  data.  In both cases, the nonce is not used to process the associated
  data, so that we can extend the PMAC attack as we did against
  OCB\footnote{Note that AEZ, COPA and POET also claim security when the nonce
    is misused, but our attacks are nonce-respecting.}.
\item The authentication of associated data in OMD~\cite{DBLP:conf/sacrypt/CoglianiMNCRVV14} and Minalpher~\cite{CAESAR_Minalpher} are
  also variants of PMAC (with a PRF that is not block cipher), and the
  attack can be applied.
\end{itemize}

In the next section, we show how to adapt the PMAC attack to Minalpher
and OMD, since the primitives are different.

\subsubsection{Minalpher.}

Minalpher~\cite{CAESAR_Minalpher} is a permutation-based CAESAR candidate, where the permutation
is used to build a tweakable block-cipher using the tweakable
Even-Mansour construction.  When the message is empty (or fixed), the
authentication part of Minalpher is very similar to PMAC.  With
associated data $A = a_1 \| \ldots a_{\text{\emph{@}}}$, the tag is
computed as:
\begin{align*}
  b_i &= P(a_i \oplus \Delta_i) \oplus \Delta_i &
  \tau &= \phi_k\left(N, M, a_{\text{\emph{@}}} \oplus \sum_{i=1}^{\text{\emph{@}}-1} b_i\right)\\
  \Delta_i &= y^i \cdot L' &
  L' &= P(k \| 0) \oplus (k \| 0)
\end{align*}
where $\phi_k$ is a permutation (we omit the description of $\phi_k$ because it is irrelevant for our
attack).  Since the tag is a function of $a_{\text{\emph{@}}} \oplus
\sum_{i=1}^{\text{\emph{@}}-1} b_i$, we can use the same attacks as
against PMAC.  For instance, we define the following function:
\begin{align*}
  f_N: \{0,1\} \times \{0,1\}^{n} &\to \{0,1\}^{n} \\
  b, x \quad &\mapsto \text{Minalpher}(N, \varepsilon, \alpha_b \| x)= \phi_k(N, \varepsilon, P(\alpha_b \oplus \Delta_1) \oplus \Delta_1 \oplus x).
\end{align*}
In particular, we have:
\begin{align*}
  f_N(b',x') = f_N(b,x) &\Leftrightarrow
P(\alpha_{b'} \oplus \Delta_1) \oplus x' = P(\alpha_b \oplus \Delta_1) \oplus x \\
&\Leftrightarrow \begin{cases}
      x' \oplus x = 0 & \text{if $b' = b$}\\
      x' \oplus x = P(\alpha_0\oplus\Delta_1)\oplus P(\alpha_1\oplus\Delta_1) & \text{if $b' \ne b$}\\
    \end{cases}
\end{align*}
Since $s=P(\alpha_0\oplus\Delta_1)\oplus P(\alpha_1\oplus\Delta_1)$ is
independent of $N$, we can easily apply Simon's algorithm to
recover $s$, and generate forgeries.

\subsubsection{OMD.}

OMD~\cite{DBLP:conf/sacrypt/CoglianiMNCRVV14} is a compression-function-based CAESAR candidate.  The internal
primitive is a keyed compression function denoted $F_k$.  Again, when
the message is empty the authentication is very similar to PMAC.  With
associated data $A = a_1 \| \ldots a_{\text{\emph{@}}}$, the tag is
computed as:
\begin{align*}
  b_i &= F_k(a_i \oplus \Delta_i) &
  \tau &= \phi_k(N, M) \oplus \sum b_i
\end{align*}
We note that the $\Delta_i$ used for the associated data do not depend
on the nonce.  Therefore we can use the second PMAC attack with the
following function:
\begin{align*}
  f_N: \{0,1\}^{n} &\to \{0,1\}^{n} \\
  x \quad &\mapsto \OMD(N, \varepsilon, x \| x)\\
  f_N(x) &= \phi_k(N, \varepsilon) \oplus F_k(x \oplus \Delta_1) \oplus F_k(x
  \oplus \Delta_2)
\end{align*}
This is the same form as seen when extending the PMAC attack to OCB,
therefore we can apply the same attack to recover $s= \Delta_1 \oplus
\Delta_2$ and generate forgeries.

\hide{
\subsubsection{Minalpher}

The algorithm \emph{Minalpher} uses a procedure 
called \emph{tweekable Even-Mansour}~\cite{DBLP:journals/iacr/CogliatiS15a} described by the authors in the following terms:
\begin{quote}
In tweakable Even-Mansour, a permutation is sandwiched by a secret value
which is constructed from a secret key and a tweak.
\end{quote}
The construction of the secret value is based on an internal state $L$ which is calculated by encrypting
the value 0 using Even-Mansour with a key composed of the original secret key, some flags, and a nonce. 
Then, each application of Even-Mansour uses a key of the form $u L$ where $u$ is some constant and the multiplication is over $GF(2^n)$. In particular, the $i$-th block of the plaintext is encrypted with the key~$y^{2i-1}L$ for some fixed constant $y$, that is, $c_i = y^{2i-1}L \oplus P( y^{2i-1}L \oplus x_i)$, where $P$ is a fixed permutation.
In the following, we denote $y^{2i-1}L \oplus P( y^{2i-1}L \oplus x)$
as $E_i(x)$

\begin{figure}
  \centering
  \begin{tikzpicture}
      \node       (in1) at (2,0) {$x_1$};
      \node      (k1) at (0,-1) {$yL$};
      \node[fork] (d1) at (1,-1) {};
      \node[xor] (x11) at (2,-1) {} edge[<-] (k1) edge[<-] (in1);
      \node[fun,minimum size=1cm] (F1) at (2,-2) {$P$} edge[<-] (x11);
      \node[xor] (x12) at (2,-3) {} edge[<-] (F1);
            \draw[->]  (d1) |- (x12);
      \node       (ou1) at (2,-3.7) {$c_1$} edge[<-](x12);
      
            \node       (in2) at (6,0) {$x_i$};
      \node      (k2) at (4,-1) {$y^{2i-1}L$};
      \node[fork] (d2) at (5,-1) {};
      \node[xor] (x21) at (6,-1) {} edge[<-] (k2) edge[<-] (in2);
      \node[fun,minimum size=1cm] (F2) at (6,-2) {$P$} edge[<-] (x21);
      \node[xor] (x22) at (6,-3) {} edge[<-] (F2);
            \draw[->]  (d2) |- (x22);
      \node       (ou2) at (6,-3.7) {$c_i$} edge[<-](x22);
      
                 \node       (in3) at (10,0) {$x_m$};
      \node      (k3) at (8,-1) {$y^{2m-1}L$};
      \node[fork] (d3) at (9,-1) {};
      \node[xor] (x31) at (10,-1) {} edge[<-] (k3) edge[<-] (in3);
      \node[fun,minimum size=1cm] (F3) at (10,-2) {$P$} edge[<-] (x31);
      \node[xor] (x32) at (10,-3) {} edge[<-] (F3);
            \draw[->]  (d3) |- (x32);
      \node       (ou3) at (10,-3.7) {$c_m$} edge[<-](x32);

  \end{tikzpicture}
  
  \caption{Encryption in Minalpher}
  \label{fig:Minalpher}
\end{figure}

\subsubsection{Attack.}
A direct application of the quantum attack against the Even-Mansour construction~\cite{6400943} 
can be used to recover the internal state $L$.
We consider the following family of functions:
\begin{align*}
  f_i: \{0,1\}^{n} &\to \{0,1\}^{n} \\
  x \quad &\mapsto E_i(x) \oplus P(x) \\
  f_i(x) &= y^{2i-1} \cdot L \oplus P(x \oplus y^{2i-1} \cdot L) \oplus  P(x)
\end{align*}
This function can easily be constructed from the encryption oracle, and
it satisfies $f(x) = f(x \oplus y^{2i-1}L)$.
Simon's algorithm can be used to recover $y^{2i-1}L$, which reveals
$L$ and allows easy forgeries.

There is, however, an important caveat in this straightforward approach.
In Simon's algorithm, each run of the procedure described in Section~\ref{sec:algorithm}
queries the oracle once in order to
output a vector that is orthogonal to $yL$.
It is thus necessary to get $m=O(n)$ such vectors that, with high probability, span the whole
vector space and allow to 
recover $yL$ using Gaussian elimination.
But the security claim of Minalpher is valid only when the attacker
is allowed to query the cryptographic oracle with a given nonce a single time.
If the nonce changes each time the cryptographic oracle is queried, then the internal state $L$ changes
and a direct application of Simon's algorithm fails.

We now show that the structure of Minalpher makes it possible to
recover the state $L$ with a single query to the cryptographic oracle.
For this purpose, we use the fact that it can encrypt plaintexts of arbitrary sizes.
Using an input of size $nm$ leads to encoding $m$ blocks of length $n$ with the same nonce (and
thus the same internal state $L$).

Consider the family of functions
$f_i: x \mapsto y^{2i-1}L \oplus P(x\oplus y^{2i-1}L) \oplus P(x)$ and
\[
g:(x_1, \ldots x_i, \ldots x_n) \mapsto (f_1(x_1), 
\ldots, f_i(x_i) ,\ldots , f_m(x_m))
\]
where $x_i \in\{0,1\}^n$. Like in the previous case, this function can be computed quantum mechanically by querying the cryptographic oracle in superposition
and composing with the function $\ket x \ket 0 \mapsto \ket x \ket {P(x)}$ on each block.
Moreover, for two inputs $x=(x_1,\ldots,x_m)$ and $x'=(x'_1,\ldots, x'_m)$, 
one can verify that $g(x)=g(x')$ if and only if $x=x'$ or $x'_i = x\oplus y^{2i-1}L$ for $i=1,\ldots, m$.

\hide{
We now run Simon's algorithm step by step with the function $g$, leading to the following steps:
\begin{enumerate}
\item Prepare the state $\phi = \frac 1{2^{(n+m)/2}}\sum_{x_1, \ldots x_n} \ket{x_1}\ldots\ket{x_n} \ket{P_1(x_1)}\ldots\ket{P_n(x_n)}$, where $x_i \in \{0,1\}^n$,
by querying the cryptographic oracle and composing with the function $\ket x \ket 0 \mapsto \ket x \ket {P(x)}$
on each block.
\item Measure the registers $m+1$ to $2m$, collapsing the state to 
\[ \phi = \frac 1{\sqrt{2^n}} 
(\ket{x^1}+\ket{x^1 \oplus yL}) \ldots (\ket{x^n}+\ket{x^m \oplus y^{2m-1}L})\] for
some states $x^1, \ldots, x^m$.
\item Apply a Hadamard gate and measure each block of $n$ qubits.
This returns a family of vectors
$u_1,\ldots u_m$ such that for all $i$, $u_i$ is orthogonal to $y^{2i-1}L$.
\end{enumerate}
}

We run Simon's algorithm with the function $g$. Note that the function $h$ is constant on the cosets of the group generated by $(yL,0, \ldots), (0, y^3L, 0, \ldots), \ldots,$ $(0, \ldots, 0, y^{2m-1}L)$. 
The important point is that at
step~3 of the protocol given in Section~\ref{sec:algorithm}, the state becomes
\[ \phi = \frac 1{\sqrt{2^m}} 
(\ket{x^1}+\ket{x^1 \oplus yL}) \ldots (\ket{x^m}+\ket{x^m \oplus y^{2m-1}L})\] for
some inputs $x^1, \ldots, x^m$.
Consequently, the algorithm yields a collection of random vectors $v_1,\ldots v_m$ such that for all $i$, $v_i$ is orthogonal to $y^{2i-1}L$.

Finally, notice that the multiplication by a constant in $GF(2^n)$ can be seen as a linear operation of the vector
space $GF(2)^n$. For $u\in GF(2^n)$, denote $M_u$ the associated linear operation.
Since $u$ has a multiplicative inverse in $GF(2^n)$, $M_u$ is full rank.
Moreover, since the vectors $v_i$ are randomly distributed, the vectors $M_u v_i$ are also randomly distributed.
Finally, the vectors $M_{y^{2i-1}}^\dagger v_i$ are all orthogonal to $L$
and it's possible to find $n$ independent vectors to recover $L$ using basic linear algebra.

In this context, the knowledge of the internal state is not sufficient to break encryption. A query to the cryptographic oracle allows to recover the internal state $L$ for a given nonce. The security claim of Minalpher prevents
querying the oracle with a nonce that is the same as the one of the message the attacker wants to decrypt.
However, the internal state is sufficient to forge a tag for any message with the corresponding key.
\begin{figure}
	\centering
	\begin{tikzpicture}
	\node       (x1in) at (0,7) {$\ket {x_1}$};
	\node       (d1) at (0,5.5) {$\vdots$};
	\node       (xnin) at (0,4) {$\ket {x_n}$};
	\node       (0in3) at (0,3) {$\ket 0$};
	\node       (d2) at (0,1.5) {$\vdots$};
	\node       (0in0) at (0,0) {$\ket 0$};
	\node       (x1inEK) at (1.7,7) {} edge[-] (x1in);
	\node       (xninEK) at (1.7,4) {} edge[-] (xnin);
	\node       (0inEK3) at (1.7,3) {} edge[-] (0in3);
	\node       (0inEK0) at (1.7,0) {} edge[-] (0in0);
	\node[fun, minimum width=1.5 cm, minimum height=8cm] at (2,3.5) {$E_K$};
	\node       (x1outEK) at (2.65,7) {};
	\node       (xnoutEK) at (2.65,4) {};
	\node       (0outEK3) at (2.65,3) {};
	\node       (0outEK0) at (2.65,0) {};
	\node       (n0in3) at (3.3,3.5) {$\ket 0$};
	\node[fork]	(control1) at (4,7) {} edge[-] (x1outEK);
	\node[xor]		(xor1) at (4,3.5) {} edge[-] (n0in3) edge[-] (control1);
	\node       (n0inF13) at (5,3.5) {} edge[-] (xor1);
	\node       (0inF13) at (5,3) {} edge[-] (0outEK3);
	\node       (d2) at (5,5.5) {$\vdots$};
	\node       (d2) at (5,1.5) {$\vdots$};
	\node[fun, minimum size=1cm]		(F1) at (5,3.25) {$P$};
	\node       (n0outF13) at (5.4,3.5) {};
	\node       (0outF13) at (5.4,3) {};
	\node[fork]	(control2) at (6,7) {} edge[-] (control1);
	\node[xor]		(xor2) at (6,3.5) {} edge[-] (n0outF13) edge[-] (control2);
	\node       (n0in0) at (6,0.5) {$\ket 0$};
	\node[fork]	(control3) at (7,4) {} edge[-] (xnoutEK);
	\node[xor]		(xor3) at (7,0.5) {} edge[-] (n0in0) edge[-] (control3);
	\node       (n0inF20) at (8,0.5) {} edge[-] (xor3);
	\node       (0inF20) at (8,0) {} edge[-] (0outEK0);
	\node       (d2) at (8,5.5) {$\vdots$};
	\node       (d2) at (8,1.5) {$\vdots$};
	\node[fun, minimum size=1cm]		(F2) at (8,0.25) {$P$};
	\node       (n0outF23) at (8.4,0.5) {};
	\node       (0outF23) at (8.4,0) {};
	\node[fork]	(control4) at (9,4) {} edge[-] (control3);
	\node[xor]		(xor4) at (9,0.5) {} edge[-] (n0outF23) edge[-] (control4);

	\node       (x1out) at (10,7) {$\ket {x_1}$} edge[-] (control2);
	\node       (xnout) at (10,4) {$\ket {x_n}$} edge[-] (control4);
	\node       (n0out1) at (10,3.5) {$\ket 0$} edge[-] (xor2);
	\node       (0out1) at (10,3) {$\ket {f_1(x_1)}$} edge[-] (0outF13);
	\node       (n0out1) at (10,0.5) {$\ket 0$} edge[-] (xor4);
	\node       (0out1) at (10,0) {$\ket {f_n(x_n)}$} edge[-] (0outF23);
	\end{tikzpicture}
\caption{Simon's function for Minalpher.}
\end{figure}
}

\hide{
\subsubsection{POET.}
POET is another second round candidate to the CAESAR competition.  As
opposed to most of the authenticated encryption schemes described so
far, POET claims to keep security against forgery attacks even if the
nonce is repeated.  In the following all our queries will use a fixed
nonce $N$.

We limit here our description of POET
to the elements required to describe our attack. A more complete description can be found in the document submitted to the CAESAR competition.
The encryption of the two first blocks of a message is represented in Figure~\ref{Poet_attack}.
In the following, $F_t$, $F_b$ and $E$ are some fixed permutations, and $X_0$ and $Y_0$ are fixed inputs.

\subsubsection{Attack.}
Fix two inputs for the first block $m_0, m'_0$ and fix a constant $\alpha \in \{0,1\}^n$.
We also use the following notations:
\begin{eqnarray*}
&d_0 = F_t(X_0)\\
&d_1=F_t(d_0\oplus m_0) \mbox{, } d'_1=F_t(d_0\oplus m'_0)\\
&\Delta= d_0\oplus d_1
\end{eqnarray*}
\begin{figure}
  \centering
  \begin{tikzpicture}[xscale=1.5]
	\node	(X0) at (0,0) {$X_0$};
	\node	(Y0) at (0,-3) {$Y_0$};

      \node[fun,minimum size=1cm] (Ft1) at (1,0) {$F_t$} edge[<-] (X0);
      \node[fun,minimum size=1cm] (E1) at (2,-1.5) {$E$};
      \node[fun,minimum size=1cm] (Fb1) at (1,-3) {$F_b$} edge[<-] (Y0);

      	\node[xor] (x10) at (2,0) {} edge[<-] (Ft1);
	\node[xor] (x11) at (2,-3) {} edge[<-] (Fb1);
	
	\node	(in10) at (2,2) {$m_0$ if $b=0$};	
	\node	(in11) at (2,1.5) {$m_0'$ if $b=1$};
	\node	(d0) at (1.5,1) {$d_0$};
	
      \draw[->]  (x10) -- (E1);
      \draw[->]  (E1) -- (x11);
	\draw[->] (in11) -- (x10);
	\node	(C0) at (2,-4) {$C_0$} edge[<-] (x11);	
	\node[fork]	(f1) at (2,-0.5) {};
	\node[fork]	(f2) at (2,-2.5) {};
	\node[fork]	(e0) at (1.5,0) {} edge[->] (d0);
	
	\node[fun,minimum size=1cm] (Ft2) at (4,0) {$F_t$};
      \node[fun,minimum size=1cm] (E2) at (5,-1.5) {$E$};
      \node[fun,minimum size=1cm] (Fb2) at (4,-3) {$F_b$};
      \node[xor] (x20) at (5,0) {} edge[<-] (Ft2);
	\node[xor] (x21) at (5,-3) {} edge[<-] (Fb2);
	
	\draw[->] (f1) -- (3,-0.5) |- (Ft2);
	\draw[->] (f2) -- (3,-2.5) |- (Fb2);
	
	\node	(in21) at (5,1.5) {$m$};
	\node	(d1) at (4.5,1) {$d_1/d'_1$};
	\node[fork]	(e2) at (4.5,0) {} edge[->] (d1);
	
	\draw[->]  (x20) -- (E2);
	\draw[->]  (E2) -- (x21);
	\draw[->] (in21) -- (x20);
	\node	(C1) at (5,-4) {$C_1$} edge[<-] (x21);

  \end{tikzpicture}
  \caption{Representation of the forgery attack on POET with arbitrary number of blocks.~\label{Poet_attack}}
\end{figure}

Our goal is to recover the value $\Delta$. For this purpose,
we define the following function $f$, to which we apply Simon's algorithm. 
Denote $C_1(m_0 \| m_1)$ the function that returns, for the fixed secret key
and a message $m_0\|m_1$, the second block of ciphertext.

\begin{align*}
  f: \{0,1\} \times \{0,1\}^{n} &\to \{0,1\}^{n} \\
  b, m \quad &\mapsto
  \begin{cases}
    C_1(m_0\|m)\oplus C_1(m_0 \| m \oplus \alpha)& \text{if $b = 0$}\\
    C_1(m'_0\|m)\oplus C_1(m'_0 \| m\oplus \alpha)& \text{if $b = 1$}
  \end{cases}
\end{align*}

This function satisfies $f(0,m) = E(m\oplus d_1) \oplus E(m\oplus d_1 \oplus \alpha)$ 
and $f(1,m) = E(m\oplus d_1') \oplus E(m\oplus d_1'\oplus \alpha)$.
As a consequence, we get $f(0,m) = f(1, m\oplus \Delta)$.

The first step of the attack is to use Simon's algorithm to recover $\Delta$. 
This part has complexity $O(n)$. We can then build multiple forgeries with just two calls to the scheme:
if $m_0 \| 0 $ outputs $C_0\|C_1\|T$ and $m'_0 \| \Delta \| 0 $ outputs $C_0'\|C_1'\|C_2\|T'$, then we know, without having to make further queries, that 
$m'_0 \| \Delta $ outputs $C_0'\|C_1'\|T.$

}

\section{Simon's algorithm applied to slide attacks}
\label{sec:slide}
In this section we show how Simon's algorithm can be applied to a cryptanalysis family: slide attacks. 
In this case, the complexity of the attack drops again exponentially, from $O(2^{n/2})$ to $O(n)$ and therefore becomes much more dangerous. To the best of our knowledge this is the first symmetric cryptanalytic technique that has an exponential speed-up in the post-quantum world.

\subsubsection{The principle of slide attacks}
\label{treyfer}
In 1999, Wagner and Biryukov introduced the technique called \emph{slide attack}~\cite{DBLP:conf/fse/BiryukovW99}. It can be applied to block ciphers 
made of $r$ applications of an identical round function $R$, each one parametrized by the same key $K$.
The attack works independently of the number of rounds, $r$.
Intuitively, for the attack to work, $R$ has to be vulnerable to known plaintext attacks.

The attacker collects $2^{n/2}$ encryptions of plaintexts. Amongst these couples of plaintext-ciphertext, 
with large probability, he gets a ``slid" pair, that is, a pair of couples $(P_0, C_0)$ and $(P_1, C_1)$ 
such that $R(P_0)=P_1$.
This immediately implies that $R(C_0)=C_1$. For the attack to work, the function $R$ needs to allow for an efficient recognition of such pairs, which in turns makes
the key extraction from~$R$ easy. A trivial application of this attack 
is the key-alternate cipher with blocks of $n$ bits, identical subkeys and no round constants.   
The complexity is then approximately $2^{n/2}$. The speed-up over exhaustive search given by this attack is
then quadratic, similar to the quantum attack based on Grover's algorithm.

This attack is successful, for example, to break the TREYFER block cipher~\cite{DBLP:conf/fse/Yuval97}, with a data complexity of $2^{32}$ and a time complexity of $2^{32+12}=2^{44}$ (where $2^{12}$ is the cost of identifying the slid pair by performing some key guesses). Comparatively,  the cost for an exhaustive search of the key is~$2^{64}$.

\subsubsection{Exponential quantum speed-up of slide attacks}
We consider the attack represented in Figure~\ref{fig:Slide-Simon}. The unkeyed round function is denoted $P$ and the whole encryption function $E_k$.

\begin{figure}
  \centering
  \begin{tikzpicture}
      \node       (P0) at (-0.5,0) {$P_0$};
      \node      (K1) at (0.5,1) {$K$};
      \node[xor] (x0) at (0.5,0) {} edge[<-] (P0) edge[<-] (K1);
      \node[fun,minimum size=1cm] (p0) at (1.8,0) {$P$} edge[<-] (x0);
      \node[fork]	(fB) at (1,0) {};
      \node	(B) at (1,-0.5) {$B$} edge[<-] (fB);
      \node      (k2) at (3,1) {$K$};
      \node[fork]	(fP1) at (2.6,0) {};
      \node	(vP1) at (2.6,-0.5) {$P_1$} edge[<-] (fP1);
      \node[xor] (x1) at (3,0) {} edge[<-] (p0) edge[<-] (k2);
      \node[fun,minimum size=1cm] (p1) at (4,0) {$P$} edge[<-] (x1);
	\node	(pp) at (5.5,0) {$\ldots$} edge[<-] (p1);
      \node      (k3) at (7,1) {$K$};
      \node[xor] (x2) at (7,0) {} edge[<-] (k3) edge[<-] (pp);
      \node[fun,minimum size=1cm] (p3) at (8,0) {$P$} edge[<-] (x2);
            \node      (k4) at (9,1) {$K$};
      \node[xor] (x3) at (9,0) {} edge[<-] (k4) edge[<-] (p3);      
      \node      (C0)  at (10,0) {$C_0$} edge[<-] (p3);

      \node       (P1) at (2,-2) {$P_1$};
      \node      (K11) at (3,-1) {$K$};
      \node[xor] (x10) at (3,-2) {} edge[<-] (P1) edge[<-] (K11);
      \node[fun,minimum size=1cm] (p10) at (4,-2) {$P$} edge[<-] (x10);
      \node      (k12) at (5,-1) {$K$};
      \node[xor] (x11) at (5,-2) {} edge[<-] (p10) edge[<-] (k12);
      \node[fun,minimum size=1cm] (p11) at (6,-2) {$P$} edge[<-] (x11);
	\node	(pp1) at (7.5,-2) {$\ldots$} edge[<-] (p11);
      \node      (k13) at (9,-1) {$K$};
      \node[xor] (x12) at (9,-2) {} edge[<-] (k13) edge[<-] (pp1);
      \node[fork] 	(fC0) at (9.4,-2) {};
      \node	(vC0) at (9.4,-1.5) {$C_0$} edge[<-] (fC0);
      \node[fun,minimum size=1cm] (p13) at (10.2,-2) {$P$} edge[<-] (x12);
      \node[fork] 	(fA) at (11,-2) {};
      \node	(A) at (11,-1.5) {$A$} edge[<-] (fA);
      \node      (k14) at (11.3,-1) {$K$};
      \node[xor] (x13) at (11.3,-2) {} edge[<-] (k14) edge[<-] (p13);      
      \node      (C1)  at (12,-2) {$C_1$} edge[<-] (p13);

  \end{tikzpicture}
  
  \caption{Representation of a slid-pair used in a slide attack.~\label{fig:Slide-Simon}}
\end{figure}
We define the following function:
\begin{align*}
  f: \{0,1\} \times \{0,1\}^{n} &\to \{0,1\}^{n} \\
  b,x \quad &\mapsto
  \begin{cases}
    P(E_k(x)) \oplus x & \text{if $b = 0$,}\\
    E_k(P(x)) \oplus x & \text{if $b = 1$.}
  \end{cases}
\end{align*}
The slide property shows that all $x$ satisfy $P(E_k(x)) \oplus k =
E_k(P(x  \oplus k))$.  This implies that $f$ satisfies the promise of
Simon's problem with $s = 1\|k$:
\begin{align*}
  f(0,x) = P(E_k(x)) \oplus x &= E_k(P(x \oplus k)) \oplus k \oplus x = f(1,x \oplus k).
\end{align*}
In order to apply Theorem~\ref{th:approx}, we bound
$\varepsilon(f,1\|k)$, assuming that both $E_k \circ P$ and $P \circ E_k$ are indistinguishable from random permutations.
If $\varepsilon(f,1\|k) > 1/2$,
there exists $(\tau,t)$ with $(\tau,t) \not\in \{(0,0), (1,k)\}$ such that:
  $\mathrm{Pr}[f(b,x) = f(b\oplus \tau, x\oplus t)] > 1/2$.
Let us assume $\tau = 0$.  This implies
\begin{align*}
    \mathrm{Pr}[f(0,x) = f(0, x\oplus t)] > 1/2  \quad \text{or} \quad 
    \mathrm{Pr}[f(1,x) = f(1, x\oplus t)] > 1/2,
\end{align*}
 which is equivalent to 
 \begin{align*}
  \mathrm{Pr}[P(E_k(x)) = P(E_k(x \oplus t)) \oplus t]
> 1/2 \quad  \text{or} \quad 
  \mathrm{Pr}[E_k(P(x)) = E_k(P(x \oplus t)) \oplus t]
> 1/2.
\end{align*}

In particular, there is a differential in $P \circ E_k$ or $E_k \circ P$
with probability $1/2$.

Otherwise, $\tau = 1$.  This implies
\begin{align*}
&  \mathrm{Pr}[P(E_k(x)) \oplus x = E_k(P(x \oplus t)) \oplus x \oplus t]
> 1/2\\
\text{\emph{i.e.}} \quad &  \mathrm{Pr}[E_k(P(x \oplus k)) \oplus k = E_k(P(x \oplus t)) \oplus t]
> 1/2.
\end{align*}
Again, it means there is a differential in $E_k \circ P$ with probability $1/2$.

Finally we conclude that $\varepsilon(f,1\|k) \le 1/2$, unless
$E_k \circ P$ or $P \circ E_k$ have differentials with probability
$1/2$.  If $E_k$ behave as a random permutation, $E_k \circ P$ and $P
\circ E_k$ also behave as random permutations, and these differential
are only found with negligible probability.
Therefore, we can apply Simon's algorithm, following
Theorem~\ref{th:approx}, and recover $k$.

\begin{figure}
	\centering
	\begin{tikzpicture}
	\node       (binp) at (0,2) {$\ket b$};
	\node	(xinp) at (0,1) {$\ket x$};
	\node	(0inp) at (0,0) {$\ket 0$};
	\node[fork]	(control) at (1,2) {} edge[-] (binp);
	\node[fun, minimum size=1cm]		(F1) at (1,1) {$P$} edge[-] (xinp) edge[-] (control);
	\node	(xinEk) at (2.1,1) {} edge[-] (F1);
	\node	(0inEk) at (2.1,0) {} edge[-] (0inp);
	\node[fun, minimum width=1cm, minimum height=2cm]		(Ek) at (2.5,0.5) {$E_k$};
	\node	(xoutEk) at (2.9,1) {};
	\node	(0outEk) at (2.9,0) {};
	\node[fun, minimum size=1cm]		(X1) at (4,2) {$X$} edge[-] (control);
	\node[fork]	(control2) at (5.5,2) {} edge[-] (X1);
	\node[fun, minimum size=1cm]		(F2) at (5.5,0) {$P$} edge[-] (0outEk) edge[-] (control2);
	\node[fun, minimum size=1cm]		(X2) at (7,2) {$X$} edge[-] (control2);
	\node[fork]	(control3) at (8.5,2) {} edge[-] (X2);
	\node[fun, minimum size=1cm]		(F3) at (8.5,1) {${(P^{-1})}^b$} edge[-] (xoutEk) edge[-] (control3);
	\node       (bout) at (10,2) {$\ket b$} edge[-] (control3);
	\node	(xout) at (10,1) {$\ket x$} edge[-] (F3);
	\node	(0out) at (10,0) {$\ket {f(b,x)}$} edge[-] (F2);	
	\end{tikzpicture}
\caption{Simon's function for slide attacks. The $X$ gate is the quantum equivalent of the \texttt{NOT} gate that flips the qubit $\ket 0$ and $\ket 1$.}
\end{figure}

\section{Conclusion}
\label{sec:conc}
We have been able to show that symmetric cryptography is far from ready for the post quantum world. We have found exponential speed-ups on attacks on symmetric cryptosystems.
In consequence, some cryptosystems that are believed to be safe in a classical world become
vulnerable in a quantum world.

With the speed-up on slide attacks, we provided the first known exponential quantum speed-up of a classical attack. This attack now becomes very powerful.  An interesting follow-up would be to seek other such speed-ups of generic techniques. 
For authenticated encryption, we have shown that many modes of operations that are believed to be solid and secure in the classical world, become completely broken in the post-quantum world. More constructions might be broken following the same ideas.

\section*{Acknowledgements}
We would like to thank Thomas Santoli and Christian Schaffner for sharing an early stage manuscript of their work~\cite{SS},
Michele Mosca for discussions and LTCI for hospitality.
This work was supported by the Commission of the European Communities
through the Horizon 2020 program under project number 645622 PQCRYPTO.
MK acknowledges funding through grants 
ANR-12-PDOC-0022-01 and ESPRC EP/N003829/1.

\ifnopromise\else
\newpage
\fi

\appendix

\section{Proof of Theorem~\ref{th:approx}}
\label{app:proof}

The proof of Theorem \ref{th:approx} is based of the following lemma.
\begin{lemma}\label{lem:sum}
For $t \in \{0,1\}^n$, consider the function $g(x) := 2^{-n} \sum_{y\in t^{\perp}} (-1)^{x \cdot y}$,
where $t^\perp = \{ y\in\{0,1\}^n \, \mathrm{s.t.}\, y \cdot t = 0\}$. for any $x$, it satisfies
\begin{align} 
g(x)= \frac{1}{2}(\delta_{x,0} + \delta_{x,t}).
\end{align}
\end{lemma}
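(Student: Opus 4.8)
The plan is to reduce the subspace character sum defining $g$ to the two elementary full-space sums $\sum_{y\in\{0,1\}^n}(-1)^{a\cdot y}=2^n\delta_{a,0}$, which hold for every $a\in\{0,1\}^n$. The crucial observation is that membership in $t^\perp$ has a simple character expression: for any $y$, the quantity $\tfrac12\big(1+(-1)^{y\cdot t}\big)$ equals $1$ when $y\cdot t=0$ and $0$ when $y\cdot t=1$, so it is exactly the indicator of $y\in t^\perp$. This lets me replace the constrained sum over $t^\perp$ by an unconstrained sum over all of $\{0,1\}^n$.

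First I would rewrite
\begin{align*}
\sum_{y\in t^\perp}(-1)^{x\cdot y}
&=\sum_{y\in\{0,1\}^n}\tfrac12\big(1+(-1)^{y\cdot t}\big)(-1)^{x\cdot y}\\
&=\tfrac12\sum_{y\in\{0,1\}^n}(-1)^{x\cdot y}
 +\tfrac12\sum_{y\in\{0,1\}^n}(-1)^{y\cdot(x\oplus t)},
\end{align*}
where in the second term I use bilinearity of the dot product over $GF(2)$ to merge the exponents, $x\cdot y \oplus y\cdot t = y\cdot(x\oplus t)$, together with $(-1)^a(-1)^b=(-1)^{a\oplus b}$ for bits.

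Next I would evaluate each full-space sum with the orthogonality relation: the first equals $2^n\delta_{x,0}$ and the second equals $2^n\delta_{x\oplus t,0}=2^n\delta_{x,t}$. Substituting gives $\sum_{y\in t^\perp}(-1)^{x\cdot y}=2^{n-1}\big(\delta_{x,0}+\delta_{x,t}\big)$, and multiplying by the prefactor $2^{-n}$ yields the claimed value $g(x)=\tfrac12(\delta_{x,0}+\delta_{x,t})$. I would remark that the derivation is uniform in $t$, so it automatically covers the degenerate case $t=0$, in which $t^\perp$ is all of $\{0,1\}^n$ and the two delta terms coincide.

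There is essentially no hard step here; the only things to get right are the indicator identity and the $GF(2)$-bilinearity bookkeeping when combining exponents. The orthogonality relation $\sum_{y}(-1)^{a\cdot y}=2^n\delta_{a,0}$ that I invoke itself follows by factoring the sum over the $n$ coordinates into a product $\prod_{i=1}^n\sum_{y_i\in\{0,1\}}(-1)^{a_i y_i}$, each factor being $2$ when $a_i=0$ and $0$ when $a_i=1$, so the product is $2^n$ exactly when $a=0$ and $0$ otherwise.
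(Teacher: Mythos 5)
Your proof is correct, and it takes a genuinely different route from the paper's. You convert the constrained sum over $t^\perp$ into unconstrained sums via the indicator identity $\mathbf{1}[y \in t^\perp] = \tfrac12\bigl(1+(-1)^{y\cdot t}\bigr)$, then finish with the orthogonality relation $\sum_{y}(-1)^{a\cdot y}=2^n\delta_{a,0}$; the bookkeeping step $x\cdot y \oplus t\cdot y = (x\oplus t)\cdot y$ is the standard $GF(2)$-bilinearity and is applied correctly. The paper instead argues by mass concentration: it splits off the case $t=0$, evaluates $g$ directly at the two points $x=0$ and $x=t$ (where all terms equal $1$, giving $g(0)=g(t)=\tfrac12$), observes that $\sum_x g(x)=1$, and then shows $g(x)\ge 0$ everywhere by a counting argument comparing $|E_0|$ and $|E_1|$ with $E_i=\{y : y\cdot x = i,\ y\cdot t = 0\}$ --- nonnegativity plus total mass $1$ plus the two values of $\tfrac12$ then force $g$ to vanish elsewhere. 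Your character-sum computation buys two things over the paper's argument: it is uniform in $t$ (no case split for $t=0$, no separate evaluation at special points), and it generalizes immediately --- the same two-line manipulation evaluates $2^{-n}\sum_{y\in H}(-1)^{x\cdot y}$ for any subgroup $H \le \{0,1\}^n$ as $2^{-n}|H|\,\mathbf{1}[x\in H^\perp]$, which is the form needed if one wanted Simon-type statements with higher-dimensional hidden subgroups. What the paper's approach buys is that it uses nothing beyond counting solutions of linear constraints, at the cost of the three-way case analysis for $|E_0|$; both proofs are short, and yours is arguably the cleaner one.
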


\begin{proof}

If $t=0$ then $g(x) = \sum_{y \in \{0,1\}^n} (-1)^{x \cdot y} =\delta(x,0)$, which proves the claim. 
From now on, assume that $t \ne 0$. 
It is straightforward to check that
$g(0) = g(t) = \frac{1}{2}$
because all the terms of the sum are equal to 1 and there are $2^{n-1}$ vectors $y$ orthogonal to $t$. 
Since $\sum_{x \in \{0,1\}^n} g(x) = 1$, it is sufficient to prove that $g(x) \geq 0$ to establish the claim in the case $t\ne 0$. 
For this,  decompose $g(x)$ into two terms:
\begin{align*}
g(x) = \sum_{y \in E_0} (-1)^{x \cdot y} - \sum_{y \in E_1} (-1)^{x \cdot y} = | E_0| - | E_1|,
\end{align*}
where
$ E_i := \{ y \in \{0,1\}^n \: \mathrm{s.t.} \: y \cdot x= i \: \mathrm{and} \: y \cdot y = 0\}$ for $i=0,1$.
Simple counting shows that:
\begin{align*}
| E_0 | = \left\{ \begin{array}{cl}
2^{n-1} & \mathrm{if} \,  x=0,\\
2^{n-1} & \mathrm{if} \, x=t,\\
2^{n-2} & \mathrm{otherwise}.
\end{array}
\right.
\end{align*}
In particular, $|E_0| \geq |E_1|$ which implies that $g(x) \geq 0$.
\end{proof}

We are now ready to prove Theorem~\ref{th:approx}.
Each call to the main subroutine of Simon's algorithm will return a vector $u_i$. If $cn$ calls are made, one obtains $cn$ vectors $u_1, \ldots, u_{cn}$. 
By construction, $f$ is such that $f(x) = f(x \oplus s)$ and consequently, the $cn$ vectors $u_1, \ldots, u_{cn}$ are all orthogonal to $s$. 
The algorithm is successful provided one can recover the value of $s$ unambiguously, which is the case if the $cn$ vectors span the $(n-1)$-dimensional space orthogonal to $s$. (Let us note that if the space is $(n-d)$-dimensional for some constant $d$, one can still recover $s$ efficiently by testing all the vectors orthogonal to the subspace.)
In other words, the failure probability $p_\text{fail}$ is
\begin{align*}
p_{\text{fail}} &= \mathrm{Pr}[\text{dim}\big(\text{Span}(u_1, \ldots, u_n)\big) \leq n-2]\\
&\leq \mathrm{Pr}[\exists t \in \{0,1\}^n\setminus\{0,s\} \,\text{s.t.} \, u_1 \cdot t = u_2 \cdot t = \cdots = u_{cn} \cdot t =0 ]\\
& \leq \sum_{t \in \{0,1\}^{n} \setminus\{0,s\} } \mathrm{Pr}[ u_1 \cdot t = u_2 \cdot t = \cdots = u_{cn} \cdot t =0 ]\\
&\leq \sum_{t \in \{0,1\}^{n} \setminus\{0,s\} } \big( \mathrm{Pr}[ u_1 \cdot t =0]\big)^{cn} \\
&\leq \max_{t \in \{0,1\}^{n} \setminus\{0,s\} } \big(2 \mathrm{Pr}[ u_1 \cdot t =0]^c \big)^n 
\end{align*}
where the second inequality results from the union bound and the third inequality follows from the fact that the results of the $cn$ subroutines are independent. 

In order to establish the theorem, it is now sufficient to show that $\mathrm{Pr}[ u \cdot t =0]$ is bounded away from $1$ for all $t$, where $u$ is the vector corresponding to the output of Simon's subroutine. We will prove that for all $t\in \{0,1\}^{n} \setminus\{0,s\}$, the following inequality holds:
\begin{align}\label{th-condition}
\mathrm{Pr}_u [ u \cdot t =0] = \frac{1}{2}\big(1 + \mathrm{Pr}_x[f(x) = f(x\oplus t)]\big) \leq \frac{1}{2}(1+\varepsilon(f,s)) \leq \frac{1}{2} (1+p_0).
\end{align}

In Simon's algorithm, one can wait until the last step before measuring both registers. 
The final state before measurement can be decomposed as:
\begin{align*}
2^{-n} \sum_{x \in \{0,1\}^n}\sum_{y \in \{0,1\}^n} (-1)^{x \cdot y} |y\rangle |f(x)\rangle = &2^{-n} \sum_{\substack{y \in \{0,1\}^n\\ \text{s.t.}\, y\cdot t=0} }\sum_{x \in \{0,1\}^n} (-1)^{x \cdot y} |y\rangle |f(x)\rangle \\
& + 2^{-n} \sum_{\substack{y \in \{0,1\}^n\\ \text{s.t.}\, y\cdot t=1} } \sum_{x \in \{0,1\}^n} (-1)^{x \cdot y} |y\rangle |f(x)\rangle.
\end{align*}
The probability of obtaining $u$ such that $u \cdot t = 0$ is given by
\begin{align}
\mathrm{Pr}_u [u \cdot t=0] &= \norm[\Big]{2^{-n} \sum_{\substack{y \in \{0,1\}^n\\ \text{s.t.}\, y\cdot t=0} } |y\rangle \sum_{x \in \{0,1\}^n} (-1)^{x \cdot y} |f(x)\rangle}^2 \nonumber\\
 &=2^{-2n} \sum_{\substack{y \in \{0,1\}^n\\ \text{s.t.}\, y\cdot t=0} }  \sum_{x,x' \in \{0,1\}^n} (-1)^{(x\oplus x') \cdot y}  \langle f(x')|f(x)\rangle \nonumber\\
 &=2^{-2n}\sum_{x,x' \in \{0,1\}^n} \langle f(x')|f(x)\rangle \sum_{\substack{y \in \{0,1\}^n\\ \text{s.t.}\, y\cdot t=0} }   (-1)^{(x \oplus x')\cdot y}  \nonumber\\
 &=2^{-2n}\sum_{x,x' \in \{0,1\}^n} \langle f(x')|f(x)\rangle 2^{n-1} (\delta_{x,x'} + \delta_{x',x \oplus t}) \label{eq:sum} \\
 &= 2^{-(n+1)} \left[  \sum_{x \in \{0,1\}^n} \langle f(x)|f(x) \rangle +  \sum_{x \in \{0,1\}^n} \langle f(x\oplus t)|f(x) \rangle \right] \\
 &= \frac{1}{2} \left[ 1 + \mathrm{Pr}_x [f(x) = f(x\oplus t) \right] 
 \end{align}
where we used Lemma \ref{lem:sum} proven in the appendix in Eq.~\ref{eq:sum}, and $\delta_{x,x'}=1$ if $x=x'$ and 0 otherwise.

\ifnopromise
\section{Proof of Theorem \ref{th:nopromise}}
\label{sec:proof-theorem-2}

Let $t$ be a fixed value and $p_t = Pr_x[f(x \oplus t = f(t)]$.  Following
the previous analysis, the
probability that the $cn$ vectors $u_i$ are orthogonal to $t$ can be
written as $\Pr[ u_1 \cdot t = u_2 \cdot t = \cdots = u_{cn} \cdot t =0
] = \left(\frac{1+p_t}{2}\right)^{cn}$.

In particular, we can bound the probability that Simon's algorithm
returns a value $t$ with $p_t < p_0$:
\[
  \Pr[p_t < p_0]
  = \sum_{t:\, p_t < p_0} \left(\frac{1+p_t}{2}\right)^{cn}
  \le 2^n \times \left(\frac{1+p_0}{2}\right)^{cn}
\]
\fi

\end{document}